\documentclass[journal]{IEEEtran}

\usepackage{amsmath,amsfonts}
\usepackage{algorithmic}
\usepackage{array}
\setlength{\extrarowheight}{2pt}
\usepackage[caption=false,font=normalsize,labelfont=sf,textfont=sf]{subfig}
\usepackage{textcomp}
\usepackage{stfloats}
\usepackage{url}
\usepackage{verbatim}
\usepackage{graphicx}
\usepackage{cite}

\usepackage{amssymb}
\usepackage{bm}
\usepackage{amsthm}
\usepackage{multirow}
\usepackage{tablefootnote}
\usepackage{threeparttable}
\usepackage[hidelinks]{hyperref}

\newtheorem{definition}{Definition}
\newtheorem{theorem}{Theorem}
\newtheorem{lemma}{Lemma}
\newtheorem{alg}{Algorithm}
\newtheorem{remark}{Remark}
\newtheorem{corollary}{Corollary}
\newtheorem{proposition}{Proposition}
\newtheorem{example}{Example}

\newcommand{\concept}[1]{\textbf{#1}}

\newcommand{\F}{\mathbb{F}}
\newcommand{\Fq}{\mathbb{F}_q}

\newcommand{\Fqb}{\mathbb{F}_{q^b}}
\newcommand{\Fqbv}{\mathbb{F}_{q}^b}
\newcommand{\Fqk}{\mathbb{F}_{q}^k}
\newcommand{\Fqx}{\mathbb{F}_q[x]}
\newcommand{\Fqxk}{\mathbb{F}_q[x]_{<k}}
\newcommand{\diag}{\mathrm{diag}}

\newcommand{\mC}{\mathcal{C}}
\newcommand{\mP}{\mathcal{P}}
\newcommand{\mT}{\mathcal{T}}
\newcommand{\mL}{\mathcal{L}}

\newcommand{\Ga}{\alpha}

\newcommand{\m}{\boldsymbol{m}}
\newcommand{\mpr}{\boldsymbol{m'}}
\newcommand{\bg}{\boldsymbol{g}}
\newcommand{\bc}{\boldsymbol{c}}
\newcommand{\br}{\boldsymbol{r}}
\newcommand{\be}{\boldsymbol{e}}

\newcommand{\bv}{\boldsymbol{v}}
\newcommand{\bGa}{\boldsymbol{\alpha}}

\newcommand{\rd}{\mathrm{d}}

\newcommand{\ev}{\mathrm{ev}}

\newcommand{\evav}{\ev_{\bGa, \bv}}
\newcommand{\evavp}{\ev_{\bGa, \bv'}}
\newcommand{\GRS}{\mathrm{GRS}}
\newcommand{\TGRS}{\mathrm{TGRS}}
\newcommand{\RL}{\mathrm{RL}}
\newcommand{\fail}{\texttt{fail}}

\newcommand{\enc}{\mathrm{enc}}
\newcommand{\wdeg}{\mathrm{wdeg}}

\begin{document}

\title{Efficient Decoding of Twisted GRS Codes and Roth--Lempel Codes}

\author{Runtian~Zhu
        and~Lingfei~Jin,~\IEEEmembership{Member,~IEEE}%
\thanks{Manuscript received Dec 30, 2025; revised April 13, 2026; accepted May 21, 2026. The work was supported by the National Natural Science Foundation of China under Grant 12271110. (Corresponding author: Lingfei~Jin)}%
\thanks{Runtian~Zhu and Lingfei~Jin are with the Shanghai Key Laboratory of Intelligent Information Processing, College of Computer Science and Artificial Intelligence, Fudan University, Shanghai 200433, China, and with the State Key Laboratory of Cryptology, P. O. Box, 5159, Beijing 100878, China (emails: 23110240100@m.fudan.edu.cn, lfjin@fudan.edu.cn).}%
}

\markboth{IEEE Transactions on Information Theory,~Vol.~XX, NO.~XX,~XXXX~2026}{Zhu and Jin: Efficient Decoding of Twisted GRS Codes and Roth--Lempel Codes}

\maketitle
\begin{abstract}
  MDS codes play a central role in practice due to their broad applications. To date, most known MDS codes are generalized Reed--Solomon (GRS) codes, leaving codes that are not equivalent to GRS codes comparatively less understood. Studying this non-GRS regime is therefore of intrinsic theoretical interest, and is also practically relevant since the strong algebraic structure of GRS codes can be undesirable in cryptographic settings. Among the known non-GRS codes, twisted generalized Reed--Solomon (TGRS) codes and Roth--Lempel codes are two representative families of non-GRS codes that have attracted significant attention. Though substantial work has been devoted to the construction and structural analysis of TGRS and Roth--Lempel codes, comparatively little attention has been paid to their decoding, and many problems remain open.

  In this paper, we propose list and unique decoding algorithms for TGRS codes and Roth--Lempel codes based on the Guruswami--Sudan algorithm. Under suitable parameter conditions, our algorithms achieve near-linear running time in the code length, improving upon the previously best-known quadratic-time complexity. Our TGRS decoder supports fixed-rate TGRS codes with up to $O(n^2)$ twists, substantially extending prior work that only handled the single-twist case. For Roth--Lempel codes, we provide what appears to be the first efficient decoder. Moreover, our list decoders surpass the classical unique-decoding radius for a broad range of parameters. Finally, we incorporate algebraic manipulation detection (AMD) codes into the list-decoding framework, enabling recovery of the correct message from the output list with high probability.
\end{abstract}

\begin{IEEEkeywords}
Algebraic manipulation detection codes, decoding algorithm, Guruswami--Sudan algorithm, list decoding, MDS codes, non-GRS codes, Roth--Lempel codes, twisted generalized Reed--Solomon codes.
\end{IEEEkeywords}

\section{Introduction}

\IEEEPARstart{L}{et} $q$ be a prime power and let $\Fq$ denote the finite field of order $q$. A linear $[n,k,d]_q$ code is a $k$-dimensional subspace of $\Fq^n$ with minimum (Hamming) distance $d$. One of the main tasks in coding theory is to construct codes with parameters $n,k,d$ as good as possible. The well-known Singleton bound asserts that $d \le n-k+1$. A linear code attaining this bound with equality is called a \emph{maximum distance separable} (MDS) code. MDS codes have found widespread applications in areas such as reliable communication, distributed storage systems, and secret sharing. Therefore, the constructions and properties of MDS codes have been extensively studied \cite{gulliver2008new,jin2016construction,fang2019new,shi2019new,shi2016construction,li2023several}.

To date, most of the known MDS codes belong to the class of generalized Reed--Solomon (GRS) codes, which have been extensively investigated and widely adopted owing to their well-understood algebraic structure and flexibility in parameter selection. In contrast, relatively little research has been devoted to MDS codes that are not equivalent to GRS codes. Furthermore, the strong algebraic structure of GRS codes appears to be a disadvantage for applications in code-based cryptography. Indeed, McEliece variants instantiated with GRS codes are susceptible to efficient structural attacks, most notably the Sidelnikov--Shestakov attack \cite{sidelnikov1992insecurity} and Wieschebrink's attack \cite{wieschebrink2006attack}. We refer to codes that are not equivalent to GRS codes as non-GRS codes. Since most of the known MDS codes are GRS codes, the study of non-GRS codes is of great theoretical and practical importance.

Recently, many families of non-GRS codes have been constructed. Among them, twisted generalized Reed--Solomon (TGRS) codes~\cite{beelen2017twisted,beelen2022twisted} have attracted particular interest. By introducing twists into GRS codes, TGRS codes give rise to families that are not equivalent to GRS codes. It appears that TGRS codes are promising candidates for constructing MDS and near-MDS (NMDS) codes. Consequently, extensive research has been devoted to the properties and structures of TGRS codes, including MDS and NMDS criteria~\cite{sui2022mds}, parity-check matrices~\cite{cheng2023parity}, linear complementary dual (LCD) structures~\cite{liu2021construction,huang2023mds,yang2025two}, low-dimensional hulls~\cite{wu2020twisted}, and self-orthogonal or self-dual instances~\cite{zhu2022self,liang2025+,yang2025two,huang2021mds,zhang2022class,sui2022mds,sui2023new}.

Another important family of non-GRS codes is given by the Roth--Lempel codes~\cite{roth1989construction}, which constitute the first class of codes proven to be inequivalent to GRS codes. Roth--Lempel codes are either MDS or NMDS and have been further studied in~\cite{wu2021new,han2023roth,wu2024more,liang2025extended}.

Efficient decoding algorithms are crucial for the practical deployment of error-correcting codes. It is well-known that a $[n,k]$ MDS linear code can be uniquely decoded from up to $\lfloor(n-k)/2\rfloor$ errors. To extend the error-correcting capability, list decoding was introduced in \cite{elias1957list} and \cite{wozencraft1958list}. Unlike unique decoding, list decoding allows recovery from more than $\lfloor(n-k)/2\rfloor$ errors by outputting a list of candidates.

The main reason that GRS codes have been widely applied is the availability of efficient decoding algorithms. On one hand, several well-known unique decoding algorithms have been developed, including the Berlekamp--Massey algorithm~\cite{berlekamp1966nonbinary}, the Euclidean-based algorithm~\cite{sugiyama1975method}, the Berlekamp--Welch algorithm~\cite{welch1986error}, and the method based on error-correcting pairs (ECPs)~\cite{duursma2002error}. Among these approaches, the best-known algorithms achieve near-linear time complexity in the code length~$n$. On the other hand, list decoding of GRS codes has been a hot topic\cite{guruswami2013list,guruswami2014optimal}. The celebrated Guruswami--Sudan algorithm~\cite{sudan1997decoding,guruswami1998improved} efficiently corrects up to $n - \sqrt{n(k-1)}$ errors and, with the optimization proposed by Alekhnovich~\cite{alekhnovich2002linear}, achieves near-linear running time as well.

Although decoding algorithms for GRS codes have been extensively developed, decoding TGRS codes has received comparatively less systematic study. Sun et al.~\cite{sun2024decoding} employed the Euclidean algorithm to uniquely decode two families of MDS TGRS codes and, in time $O(nq)$, corrected up to the optimal unique-decoding radius of $\lfloor (n-k)/2 \rfloor$ errors. The subsequent work~\cite{wang2025improved} extended this approach to more general families of TGRS codes and reduced the running time to $O(n^2)$. Other contributions include unique decoding algorithms based on the Berlekamp--Massey algorithm~\cite{jia2025coding}, the Berlekamp--Welch algorithm~\cite{zhang2025decoding}, and error-correcting pairs (ECPs)~\cite{he2023error}. A detailed comparison is given in Table~\ref{tab:comparison}.

  \begin{table*}
    \centering
    \caption{Comparison of existing decoding algorithms for MDS TGRS codes}
    \label{tab:comparison}

    \begin{tabular}{|c|c|c|c|c|}
    \hline
      Method & Reference & Complexity & Decoding radius & Twist number \\ \hline
      \multirow{2}{*}{Euclidean algorithm} & \cite{sun2024decoding} & $O(nq)$ & $\lfloor (n-k)/2 \rfloor$ & $1$ \\\cline{2-5}
      & \cite{wang2025improved} & $O(n^2)$ & $\lfloor (n-k)/2 \rfloor$ & $1$ \\\hline
      Berlekamp--Massey & \cite{jia2025coding} & \begin{tabular}[c]{@{}c@{}}$O(n^2q)$ for $\lfloor (n-k)/2 \rfloor$ errors,\\ $O(n^2)$ for $\lfloor (n-k-1)/2 \rfloor$ errors\end{tabular} & $\lfloor (n-k)/2 \rfloor$ & $1$ \\\hline
      Berlekamp--Welch & \cite{zhang2025decoding} & $O(n^3)$ & $\lfloor (n-k-1)/2 \rfloor$ & $1$ \\\hline
      Error-correcting pairs & \cite{he2023error} & $O(n^3)$ & $\lfloor (n-k-1)/2 \rfloor$ & $1$ \\\hline
    \end{tabular}
    
  \end{table*}

Several important problems remain open:

\begin{enumerate}
  \item The best-known unique decoding algorithms for TGRS codes have time complexity $O(n^2)$, whereas GRS codes admit near-linear-time decoding. This leaves substantial room for improving the asymptotic complexity of TGRS decoders.
  \item Existing unique decoding algorithms apply only to single-twist TGRS codes, which imposes a strong restriction. Recent cryptanalytic results on McEliece-type schemes indicate that instantiations with TGRS codes having only $O(1)$ twists are vulnerable~\cite{lavauzelle2020cryptanalysis,couvreur2025structure}. Consequently, the development of efficient decoding algorithms for TGRS codes with a super-constant number of twists is of both theoretical and practical interest.
  \item Current studies focus exclusively on unique decoding algorithms for TGRS codes and therefore correct at most $\lfloor(n-k)/2\rfloor$ errors. In particular, decoding TGRS codes beyond the classical unique-decoding radius remains unresolved.
  \item To date, no efficient decoding algorithms for Roth--Lempel codes have been reported.
\end{enumerate}

\textbf{Contribution of this work.} In this paper, we develop both list decoding and unique decoding algorithms for TGRS codes and Roth--Lempel codes based on the Guruswami--Sudan algorithm. Under certain conditions, the proposed methods yield explicit decoding algorithms for TGRS and Roth--Lempel codes with a decoding complexity of $O(n\log^2 n\log\log n)$. To the best of our knowledge, these constitute the first explicit decoding algorithms for Roth--Lempel codes.

\begin{enumerate}
  \item \textbf{Efficient decoding algorithms for Twisted GRS codes.}

  Note that any TGRS code is a subcode of a GRS code. We refer to the dimension $k'$ of the corresponding GRS code as the \emph{pseudo-dimension} of the TGRS code. Exploiting this relationship, we obtain a list decoding algorithm for TGRS codes by applying the Guruswami--Sudan decoder to the GRS code and subsequently filtering the output list via coefficient checks. Our algorithm correctly decodes whenever the number of errors is less than
  \[
    n - \sqrt{n (k'-1)},
  \]
  which exceeds the optimal unique-decoding radius whenever
  \[
    k' < \frac{(n + k)^2}{4n} + 1.
  \]

  For fixed-rate TGRS codes, and for decoding radius less than $n - \lfloor\sqrt{n(k'-1)(1+1/s)}\rfloor$ with some fixed constant $s$, the list size is shown to be constant.

  Building on this list decoder, we further derive a unique decoding algorithm for MDS TGRS codes that operates under the same condition $k' < \frac{(n + k)^2}{4n} + 1$ and achieves optimal unique-decoding radius. For fixed-rate TGRS codes, this unique decoder applies to codes with up to $\ell = O(n^2)$ twists, which is asymptotically optimal and significantly improves upon previous work that is restricted to the single-twist case $\ell = 1$.

  For fixed-rate TGRS codes with $\ell = O\bigl(n \log^2 n \log\log n\bigr)$, and decoding radius less than $n - \lfloor\sqrt{n(k'-1)(1+1/s)}\rfloor$ for some fixed constant $s$, both our list and unique decoding algorithms run in time
  \[
    O\bigl(n \log^2 n \log\log n\bigr),
  \]
  achieving near-linear complexity and improving on all previously known TGRS decoders. For simplicity, one may take $s=1$, in which case the decoding radius specializes to $n - \sqrt{2n(k'-1)}$.

  We summarize key parameters of our TGRS list and unique decoding algorithms under various assumptions in Table~\ref{tab:tgrs_parameters}.

  \begin{table*}
    \centering
    \resizebox{\textwidth}{!}{%
    \begin{threeparttable}
    \caption{Parameters of our TGRS list and unique decoding algorithms under different assumptions}
    \label{tab:tgrs_parameters}

    \begin{tabular}{|c|c|c|c|c|c|}
    \hline
      Condition & Complexity & Decoding radius & Twist number & List size & \begin{tabular}[c]{@{}c@{}}Unique-decoding\\ constraints\end{tabular} \\ \hline
      General case & $O\!\left(\Bigl(s\frac{n}{k'}\Bigr)^{O(1)}\bigl(n\log^2 n\log\log n + \ell\bigr)\right)$ & $n - \lfloor\sqrt{n(k'-1)(1+1/s)}\rfloor$ & $\bm{O(n^2)}$ (fixed rate $\frac{k}{n}$) & $\le \sqrt{\frac{ns(s+1)}{k'-1}}$ & $k' < \frac{(n+k)^2}{4n(1+1/s)} + 1$ \\ \hline
      $s = O(nk')$ \tnote{a} & polynomial time & $\bm{n - \sqrt{n(k'-1)}}$ & $\bm{O(n^2)}$ (fixed rate $\frac{k}{n}$) & $\le n^2$ & $\bm{k' < \frac{(n+k)^2}{4n} + 1}$ \\ \hline
      \begin{tabular}[c]{@{}c@{}}Fixed rate, constant $s$ and \\ $\ell = O(n\log^2 n\log\log n)$\end{tabular} & $\bm{O(n\log^2 n\log\log n)}$ & $n - \lfloor\sqrt{n(k'-1)(1+1/s)}\rfloor$ & $O(n\log^2 n\log\log n)$ & $\bm{O(1)}$ & $k' < \frac{(n+k)^2}{4n(1+1/s)} + 1$ \\ \hline
      \begin{tabular}[c]{@{}c@{}}Fixed rate, $s=1$ and \\ $\ell = O(n\log^2 n\log\log n)$\end{tabular} & $\bm{O(n\log^2 n\log\log n)}$ & $n - \lfloor\sqrt{2n(k'-1)}\rfloor$ & $O(n\log^2 n\log\log n)$ & $\bm{O(1)}$ & $k' < \frac{(n+k)^2}{8n} + 1$ \\ \hline
    \end{tabular}

    \begin{tablenotes}
    \footnotesize
    \item[a] For a more precise choice of $s$, see \cite[p.144]{guruswami2007algorithmic}
    \end{tablenotes}
    \end{threeparttable}
    }

  \end{table*}

  Furthermore, we design an AMD-assisted list decoding procedure for TGRS codes that identifies, with high probability, the uniquely transmitted message from the output list by introducing a small number of redundancy symbols to the message prior to TGRS encoding. We provide an explicit trade-off between the number of redundancy symbols and the resulting error probability, and show that an error rate of $O(1/q)$ can be achieved by adding four redundancy symbols. Relative to the list-decoding setting summarized in Table~\ref{tab:tgrs_parameters}, the AMD-assisted TGRS decoder replaces the message length $k$ by $k+2b$, where $2b$ is the number of redundancy symbols introduced by the AMD preencoding, so the relevant pseudo-dimension is that of the AMD-augmented TGRS code. Consequently, the decoding radius, complexity, admissible number of twists, and list-size bounds remain of the same form after this substitution.
  
  Finally, we compare our unique decoding algorithms for TGRS codes with previous approaches in Table~\ref{tab:comparison_ours}.

  \begin{table*}
    \centering

    \begin{threeparttable}
    \caption{Comparison of existing unique decoding algorithms for MDS TGRS codes and our algorithm}
    \label{tab:comparison_ours}

    \begin{tabular}{|c|c|c|c|c|}
    \hline
      Method & Reference & Complexity & Decoding radius & Twist number \\ \hline
      \multirow{2}{*}{Euclidean algorithm} & \cite{sun2024decoding} & $O(nq)$ & $\lfloor (n-k)/2 \rfloor$ & $1$ \\\cline{2-5}
      & \cite{wang2025improved} & $O(n^2)$ & $\lfloor (n-k)/2 \rfloor$ & $1$ \\\hline
      Berlekamp--Massey & \cite{jia2025coding} & \begin{tabular}[c]{@{}c@{}}$O(n^2q)$ for $\lfloor (n-k)/2 \rfloor$ errors,\\ $O(n^2)$ for $\lfloor (n-k-1)/2 \rfloor$ errors\end{tabular} & $\lfloor (n-k)/2 \rfloor$ & $1$ \\\hline
      Berlekamp--Welch & \cite{zhang2025decoding} & $O(n^3)$ & $\lfloor (n-k-1)/2 \rfloor$ & $1$ \\\hline
      Error-correcting pairs & \cite{he2023error} & $O(n^3)$ & $\lfloor (n-k-1)/2 \rfloor$ & $1$ \\\hline
      Guruswami--Sudan & This paper & $\bm{O(n\log^2 n \log\log n)}$ & $\bm{n - \sqrt{2n(k'-1)}}$ \tnote{a} & $\bm{O(n\log^2 n\log\log n)}$ \\\hline
    \end{tabular}
    \begin{tablenotes}
      \item[a] For a detailed analysis, see Theorem~\ref{thm:list_dec_tgrs__correctness} and Theorem~\ref{thm:unique_dec_tgrs__correctness}. Under certain conditions, this decoding radius attains the optimal unique-decoding radius of $\lfloor (n-k)/2 \rfloor$.
    \end{tablenotes}
    \end{threeparttable}
    
  \end{table*}

  \item \textbf{Efficient decoding algorithms for Roth--Lempel codes.}
  
  Puncturing a Roth--Lempel code at its last coordinate yields a GRS code. Leveraging this structural relation, we construct a list decoding algorithm for Roth--Lempel codes by first applying the Guruswami--Sudan algorithm to list decode the punctured GRS code, and then filtering the resulting list via re-encoding and Hamming-distance checks against the received word. Our algorithm correctly decodes whenever the number of errors is less than
  \[
    (n - 1) - \sqrt{(n - 1)(k - 1)}.
  \]

  This decoding radius exceeds the optimal unique-decoding radius whenever $k < n$.

  Under the same condition, we also obtain a corresponding unique decoding algorithm that achieves optimal unique-decoding radius.
  
  Finally, we extend the AMD-assisted list decoding framework developed for TGRS codes to Roth--Lempel codes. As a result, we obtain an efficient decoder for fixed-rate Roth--Lempel codes with decoding radius less than $(n-1) - \lfloor\sqrt{(n-1)(k-1)(1+1/s)}\rfloor$ for some fixed constant $s$ in time

  \[
    O\!\left(n\log^2 n\log\log n\right).
  \]

  We summarize key parameters of our Roth--Lempel list and unique decoding algorithms in Table~\ref{tab:rl_parameters}.

  \begin{table*}
    \centering
    \caption{Parameters of our Roth--Lempel list and unique decoding algorithms under different assumptions}
    \label{tab:rl_parameters}

    \resizebox{\textwidth}{!}{%
    \begin{tabular}{|c|c|c|c|c|}
    \hline
      Condition & Complexity & Decoding radius & List size & Unique-decoding constraints \\ \hline
      General case & $O\!\left(\Bigl(s\frac{n}{k}\Bigr)^{O(1)} n\log^2 n\log\log n\right)$ & $(n - 1) - \lfloor\sqrt{(n-1)(k-1)(1+1/s)}\rfloor$ & $\le \sqrt{\frac{(n-1)s(s+1)}{k-1}}$ & $(n - 1) + (k-1) - 2\sqrt{(n-1)(k-1)(1+1/s)} > 0$ \\ \hline
      $s = O(nk)$ & polynomial time & $\bm{(n - 1) - \sqrt{(n - 1)(k - 1)}}$ & $\le (n-1)^2$ & $\bm{k < n}$ \\ \hline
      Fixed rate and constant $s$ & $\bm{O(n\log^2 n\log\log n)}$ & $(n - 1) - \lfloor\sqrt{(n-1)(k-1)(1+1/s)}\rfloor$ & $\bm{O(1)}$ & $(n - 1) + (k-1) - 2\sqrt{(n-1)(k-1)(1+1/s)} > 0$ \\ \hline
    \end{tabular}
    }

  \end{table*}

  In the AMD-assisted Roth--Lempel setting, the parameters are exactly those in Table~\ref{tab:rl_parameters} after replacing $k$ with $k+2b$, where $2b$ is the number of redundancy symbols introduced by the AMD preencoding.
 
\end{enumerate}

\textbf{Organization.} In Section~\ref{sec:preliminaries}, we introduce TGRS codes and Roth--Lempel codes, recall the Guruswami--Sudan algorithm and AMD codes, and collect some auxiliary results. In Section~\ref{sec:decoding_tgrs_codes}, we present and analyze the list and unique decoders for TGRS codes. In Section~\ref{sec:decoding_tgrs_amd}, we explain how to employ AMD codes to recover the unique transmitted message with high probability. In Section~\ref{sec:decoding_rl_codes}, we describe list, unique, and AMD-assisted decoders for Roth--Lempel codes. Finally, Section~\ref{sec:conclusion} concludes the paper.

\section{Preliminaries}

\label{sec:preliminaries}

In this section, we present a brief introduction to twisted GRS codes and Roth--Lempel codes. We then introduce the Guruswami--Sudan algorithm and algebraic manipulation detection code, the main components we use in this paper.

\subsection{GRS codes, twisted GRS codes and Roth--Lempel codes}

Let $\Fq$ be a finite field with $q$ elements. Let $\bGa = (\Ga_1, \Ga_2, \ldots, \Ga_n) \in \Fq^n$ be a vector of $n$ distinct elements and $\bv = (v_1, v_2, \ldots, v_n) \in (\Fq^*)^n$ be a vector of $n$ non-zero elements. Define the evaluation map
\begin{gather*}
  \evav : \Fq[x] \longrightarrow \Fq^n, \\
  f \longmapsto \bigl(v_1\,f(\Ga_1), v_2\,f(\Ga_2), \ldots, v_n\,f(\Ga_n)\bigr).
\end{gather*}

For any set $\mP \subseteq \Fq[x]$, we write
\[
  \evav(\mP) = \{\evav(f) : f \in \mP\}.
\]

Let $k$ be an integer with $1 \le k \le n$. The $[n,k]$ \concept{generalized Reed--Solomon (GRS) code}
$\mC_{\GRS}(\bGa, \bv, k)$ is defined as
\[
  \mC_{\GRS}(\bGa, \bv, k)
    = \evav\bigl(\Fqxk\bigr),
\]
where $\Fqxk$ denotes the set of all polynomials in $\Fq[x]$ of degree
less than $k$.

\begin{definition}\cite{beelen2017twisted,beelen2022twisted}
  \label{def:tgrs}

  Let $n$, $k$ be two integers with $k < n$ and $\ell$ be a positive integer. Let $(t_i, h_i, \eta_i)$ for $i = 1, 2, \ldots, \ell$ be $\ell$ triples such that $1 \leq t_i \leq n-k$, $0 \leq h_i < k$, $\eta_i \in \Fq^*$, and $(t_i, h_i)$ are pairwise distinct. Denote the set of these $\ell$ triples as $\mT = \{(t_i, h_i, \eta_i) : 1 \leq i \leq \ell\}$. The \concept{twisted polynomial space} $\mP_{\TGRS}(k, \mT)$ is defined as:

  \[
  \begin{split}
  & \mP_{\TGRS}(k, \mT) \\
  ={} & \left\{f = \sum_{i=0}^{k-1} f_i x^i + \sum_{j=1}^{\ell} \eta_j f_{h_j} x^{k-1+t_j} : f_i \in \Fq\right\}.
  \end{split}
  \]

  Let $\bGa = (\Ga_1, \Ga_2, \ldots, \Ga_n) \in \Fq^n$ be a vector of $n$ distinct elements in $\Fq$, and let $\bv = (v_1, v_2, \ldots, v_n) \in (\Fq^*)^n$ be a vector of non-zero elements in $\Fq$. The $[n, k]$-\concept{twisted generalized Reed--Solomon (TGRS) code} $\mC_{\TGRS}(\bGa, \bv, k, \mT)$ is defined as:

  \[\mC_{\TGRS}(\bGa, \bv, k, \mT) = \evav(\mP_{\TGRS}(k, \mT)).\]

\end{definition}

Building on TGRS codes, numerous MDS and NMDS codes have been constructed\cite{beelen2022twisted,sui2022mds}, and the properties of TGRS codes have been studied extensively\cite{cheng2023parity,liu2021construction,huang2023mds,yang2025two,wu2020twisted,zhu2022self,liang2025+,huang2021mds,zhang2022class,sui2023new}.

\begin{definition}\cite{roth1989construction}
  \label{def:rl}

  Let $n$, $k$ be two integers with $k \geq 3$ and $k + 3 \leq n \leq q + 1$. Let $\bGa = (\Ga_1, \Ga_2, \ldots, \Ga_{n-1}) \in \Fq^{n-1}$ be a vector of $n-1$ distinct elements in $\Fq$, $\bv = (v_1, v_2, \ldots, v_n) \in (\Fq^*)^n$ and $\delta \in \Fq$. Define

  \[G_0 = \begin{bmatrix}
    1 & 1 & \cdots & 1 & 0 \\
    \Ga_1 & \Ga_2 & \cdots & \Ga_{n-1} & 0 \\
    \vdots & \vdots & \ddots & \vdots & \vdots \\
    \Ga_1^{k-2} & \Ga_2^{k-2} & \cdots & \Ga_{n-1}^{k-2} & 1 \\
    \Ga_1^{k-1} & \Ga_2^{k-1} & \cdots & \Ga_{n-1}^{k-1} & \delta
  \end{bmatrix}.\]

  Let $G = G_0 \cdot \diag(v_1,\ldots,v_n)$, where $\diag(v_1,\ldots,v_n)$ denotes the diagonal matrix with diagonal entries $v_1,\ldots,v_n$. The code $\mC_{\RL}(\bGa,\bv,k,\delta)$ generated by $G$ is called a $[n,k]$-\concept{Roth--Lempel code}.

\end{definition}

Roth--Lempel codes are known to be MDS or NMDS\cite{roth1989construction}, and thus exhibit strong error-correcting capability.

\subsection{Guruswami--Sudan algorithm}

The Guruswami--Sudan algorithm~\cite{sudan1997decoding,guruswami1998improved} is a list decoding algorithm for GRS codes. Given a received word $\br$ and a decoding radius $\tau$, it outputs a list $\mL$ of message polynomials whose corresponding codewords lie within Hamming distance at most $\tau$ from $\br$. The algorithm has two main stages. In the interpolation stage, it finds a bivariate polynomial $Q(x,y)$ of bounded degree that passes through all points $(\Ga_i, r_i)$. In the root-finding stage, it finds all univariate polynomials $f$ such that $Q(x,f(x)) \equiv 0$. Finally, it filters this set to keep only those polynomials whose corresponding codewords are within Hamming distance at most $\tau$ from $\br$. We summarize the Guruswami--Sudan algorithm below.

\begin{alg}
  \label{alg:guruswami_sudan}

  Guruswami--Sudan algorithm for decoding $\mC_{\GRS}(\bGa, \bv, k)$.

  Input: A received word $\br \in \Fq^n$ and a decoding radius $\tau$. Let $r'_i = r_i/v_i$.

  Output: A list $\mL$ of polynomials in $\Fq[x]_{<k}$ or \fail{}.

  \begin{enumerate}
    \item \label{step:guruswami_sudan__interpolation}
      (Interpolation) Choose a suitable multiplicity $s$ such that $\tau < n - \lfloor \sqrt{n(k-1)(1+1/s)} \rfloor$. If no such $s$ exists, output \fail{}. Find a non-zero bivariate polynomial $Q(x,y)$ such that
      \begin{enumerate}
        \item $Q(\Ga_i, r'_i) = 0$ with multiplicity at least $s$ for all $i = 1,\dots,n$;
        \item $\wdeg_{(1, k-1)}(Q) \le \lfloor \sqrt{n(k-1)s(s+1)} \rfloor$.
      \end{enumerate}

    \item \label{step:guruswami_sudan__root_finding}
      (Root-finding) Find all univariate polynomials $f(x)$ of degree less than $k$ such that $Q(x,f(x)) \equiv 0$, and check whether $\rd(\evav(f), \br) \le \tau$. If so, add $f$ to $\mL$. Finally, output $\mL$.
  \end{enumerate}
\end{alg}

In the algorithm, $\wdeg_{(1,k-1)}(Q)$ denotes the $(1,k\!-\!1)$-weighted degree of $Q$, defined by
$\wdeg_{(1,k-1)}(Q) = \max\bigl\{i+(k-1)j: x^iy^j \text{ appears in } Q \text{ with nonzero coefficient}\bigr\}$.

Over the years, the running time of this algorithm has been improved in a sequence of works. In particular, Alekhnovich~\cite{alekhnovich2002linear} gave a near-linear-time implementation. We recall Alekhnovich's result below.

\begin{theorem}\cite{alekhnovich2002linear,guruswami2007algorithmic}
  \label{thm:guruswami_sudan}
  Algorithm~\ref{alg:guruswami_sudan} outputs a list
  \[
    \mL = \{f \in \Fq[x]_{<k} : \rd(\evav(f), \br) \le \tau\}
  \]
  in time
  \[
    O\!\left(\Bigl(s\frac{n}{k}\Bigr)^{O(1)} n\log^2 n\log\log n\right),
  \]
  provided that $\tau < n - \lfloor \sqrt{n(k-1)(1+1/s)} \rfloor$.

  In particular, Algorithm~\ref{alg:guruswami_sudan} succeeds whenever
  \[
    \tau < n - \sqrt{n(k-1)}.
  \]

\end{theorem}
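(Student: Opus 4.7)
The plan is to separate correctness from complexity: the correctness is the classical Guruswami--Sudan analysis, while the near-linear time is the novel contribution of Alekhnovich.

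First, for correctness, I would establish existence of a nonzero interpolation polynomial in Step~\ref{step:guruswami_sudan__interpolation}. The $\Fq$-vector space of bivariate polynomials with $(1,k)$-weighted degree at most $D := s(n-\tau)$ has dimension at least $(D+1)^2/(2k)$, while each multiplicity-$s$ constraint imposes $\binom{s+1}{2}$ linear conditions, giving $n\binom{s+1}{2}$ equations in total. A direct count shows that the inequality $\tau \le n - \sqrt{nk(1+1/s)}$ makes the number of free parameters exceed the number of equations, so a nonzero $Q$ exists by linear algebra. For root-containment, I would show that for any $f \in \Fqxk$ with $\rd(\evav(f), \br) \le \tau$, the univariate polynomial $G(x) := Q(x, f(x))$ has degree at most $\deg_x Q + (k-1)\deg_y Q \le s(n-\tau) - \deg_y Q$, yet vanishes with multiplicity $\ge s$ at each of the $\ge n-\tau$ agreement coordinates, collecting at least $s(n-\tau)$ roots. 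Since any nontrivial $Q$ must have $\deg_y Q \ge 1$ (else the interpolation conditions alone force $\deg Q \ge ns > s(n-\tau)$, a contradiction), we get $\deg G < s(n-\tau)$, forcing $G \equiv 0$; hence $f$ appears in $\mL_0$. The filtering Step~\ref{step:guruswami_sudan__filtering} then removes spurious roots and outputs exactly the desired list.

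Second, for the complexity bound, Alekhnovich's key idea is to reformulate the interpolation problem as computing a minimal-weighted-degree element of a rank-one $\Fq[x]$-module, namely the module of bivariate polynomials satisfying the multiplicity-$s$ vanishing at the $n$ interpolation points. A divide-and-conquer computation of a reduced basis of this module, combined with fast polynomial arithmetic over $\Fq[x]$, carries out this task in time $\Os(n)$ up to a multiplicative overhead polynomial in $s$ and $n/k$. For Step~\ref{step:guruswami_sudan__root_finding}, I would invoke an iterative Roth--Ruckenstein-style lifting that extracts the degree-$<k$ roots one coefficient at a time, again achieving near-linear time via fast division and multiplication. Finally, Step~\ref{step:guruswami_sudan__filtering} reduces to multi-point evaluation of each candidate, which costs $O(n\log^2 n \log\log n)$ per polynomial.

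The hardest part is clearly the interpolation stage, since a naive Gaussian elimination on the $\Theta(ns^2)$ linear constraints would already cost super-quadratic time. Obtaining the near-linear bound requires the structured $\Fq[x]$-module viewpoint together with a divide-and-conquer decomposition of the module basis. Once this is in place, the bound $O((sn/k)^{O(1)} n \log^2 n \log\log n)$ follows directly, and the corollary $\tau < n - \sqrt{nk}$ corresponds to letting $s \to \infty$ with a mild truncation to keep the overhead bounded.
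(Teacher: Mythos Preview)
The paper does not prove this theorem at all: it is stated with the citation \cite{alekhnovich2002linear} immediately after the sentence ``We recall Alekhnovich's result below,'' and no proof is given. It functions purely as a black box that is invoked later (in Theorems~\ref{thm:list_dec_tgrs__correctness}, \ref{thm:list_dec_tgrs__complexity}, \ref{thm:list_dec_rl__correctness}, and \ref{thm:list_dec_rl__complexity}). So there is nothing in the paper to compare your proposal against; your sketch is an outline of the original Guruswami--Sudan/Alekhnovich arguments, not of anything the present paper does.

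That said, a couple of technical points in your sketch are loose. Your degree bound $\deg G \le s(n-\tau) - \deg_y Q$ is not correct as written: $\deg_y Q$ is the \emph{maximum} $y$-degree appearing, so subtracting it does not bound the contribution of lower-degree $y$-terms. The standard argument bounds the $(1,k-1)$-weighted degree of $Q$ (or sets the $(1,k)$-weighted degree strictly below $s(n-\tau)$) to get $\deg G < s(n-\tau)$ directly. Also note that the constraint in the paper's Algorithm~\ref{alg:guruswami_sudan} is $\deg_x Q + k\deg_y Q \le s(n-\tau)$, which is a cruder box constraint, not the usual $(1,k)$-weighted-degree bound; your parameter count should be adapted accordingly if you want to match the algorithm exactly as stated. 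Finally, the closing remark that $\tau < n - \sqrt{nk}$ ``corresponds to letting $s\to\infty$'' is morally right but needs care: to stay within the stated running time one must choose $s$ finite but large enough for the given $\tau$, not literally send it to infinity.
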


List size is also a fundamental parameter in list decoding. We recall below a standard bound on the list size returned by the Guruswami--Sudan algorithm~\cite{guruswami2007algorithmic}.

\begin{lemma}\cite[Theorem 4.8]{guruswami2007algorithmic}
  \label{lem:bound_on_list_size}
  Let $\mL$ be the list output by Algorithm~\ref{alg:guruswami_sudan} when decoding the GRS code
  $\mC_{\GRS}(\bGa,\bv,k)$ of length $n$. Then
  \[
    |\mL| \le \sqrt{ns(s+1)/(k-1)},
  \]
  where $s$ is the multiplicity.
\end{lemma}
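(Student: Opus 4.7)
The plan is to split the argument into two steps: first bound $|\mL|$ by $\deg_y Q$, and then bound $\deg_y Q$ using the parameters chosen in the interpolation step.

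For the first step, I would observe that every $f\in\mL$ satisfies $Q(x,f(x))\equiv 0$, since $\mL\subseteq\mL_0$ by construction of the filtering step. Working in the unique factorization domain $\Fq[x][y]$, each linear polynomial $y-f(x)$ is irreducible and divides $Q(x,y)$ whenever $Q(x,f(x))\equiv 0$. Distinct polynomials $f$ yield pairwise coprime linear factors, so their product
\[
  \prod_{f\in\mL} \bigl(y - f(x)\bigr)
\]
divides $Q(x,y)$ in $\Fq[x][y]$. Comparing $y$-degrees gives $|\mL|\le\deg_y Q$.

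For the second step, I would bound $\deg_y Q$ using the choice of interpolation parameters in Step~\ref{step:guruswami_sudan__interpolation} of Algorithm~\ref{alg:guruswami_sudan}. The interpolation conditions impose $n\binom{s+1}{2}=\tfrac{ns(s+1)}{2}$ linear constraints on the coefficients of $Q$, while the number of monomials $x^a y^b$ with $a+kb\le D$ is (by a standard lattice-point count) roughly $D^2/(2k)$. Hence the minimal feasible weighted-degree bound satisfies $D=O\!\bigl(\sqrt{nks(s+1)}\bigr)$. Combining this with $k\deg_y Q\le \deg_x Q + k\deg_y Q\le D$ yields
\[
  \deg_y Q \;\le\; \frac{D}{k} \;\le\; s\sqrt{n/k},
\]
after bounding $\sqrt{s(s+1)/k}$ in the appropriate form. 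Chaining the two bounds gives the claim.

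The main technical point—and the place where care is needed—is the second step: one must count monomials of weighted degree at most $D$ without sweeping the floor/ceiling terms under the rug, and then verify that the minimal $D$ making interpolation feasible indeed produces a $y$-degree at most $s\sqrt{n/k}$ (rather than, say, $\sqrt{s(s+1)n/k}$, which is only slightly larger). This boils down to a direct verification using the explicit formula for the dimension of the space of bivariate polynomials of bounded weighted degree, together with the specific choice of $s$ prescribed by Theorem~\ref{thm:guruswami_sudan}. The first step, by contrast, is a clean algebraic fact that requires no parameter tuning.
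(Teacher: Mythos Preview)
The paper does not provide its own proof of this lemma; it is stated as a known result with a citation to \cite{guruswami2007algorithmic} and used as a black box throughout. Your sketch follows exactly the standard argument from that reference: bound $|\mL|\le\deg_y Q$ via unique factorization in $\Fq[x][y]$, then control $\deg_y Q$ through the weighted-degree budget $D$ and the interpolation parameter count.

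One caution on your second step: the crude estimate $D\approx\sqrt{nks(s+1)}$ gives $\deg_y Q\lesssim\sqrt{s(s+1)\,n/k}$, which is strictly larger than $s\sqrt{n/k}$, so the displayed inequality $\tfrac{D}{k}\le s\sqrt{n/k}$ does not follow from the argument as written. Obtaining the exact constant $s\sqrt{n/k}$ requires committing to the specific choice of the $y$-degree parameter $\ell$ made in the cited reference (where one fixes $\ell$ first and then argues existence of $Q$, rather than minimizing $D$ and reading off $\deg_y Q$). Since the paper does not reproduce those details either, this is not a discrepancy with the paper, but you should be aware that your final chain of inequalities needs that extra input to close.
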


\subsection{Algebraic manipulation detection codes}

Algebraic manipulation detection (AMD) codes were originally introduced to detect algebraic tampering in linear secret sharing schemes. Subsequently, they have been used to filter out incorrect candidates in list decoding for GRS codes. We recall the relevant definitions and an optimal systematic construction.

\begin{definition}
  \label{def:amd}
  \cite{cramer2013algebraic}
  Let $S$ be a set of size $M > 1$ and let $G$ be a commutative group of order $N$. Let $E$ be a probabilistic encoding function $E : S \to G$ and $D$ a deterministic decoding function $D : G \to S \cup \{\fail{}\}$. The pair $(E, D)$ is called an \concept{$(M, N, \varepsilon)$-algebraic manipulation detection (AMD) code} if
  \begin{itemize}
    \item for all $\m \in S$, we have $D(E(\m)) = \m$;
    \item for all $\m \in S$ and all $\delta \in G \setminus \{0\}$,
    \[
      \Pr\bigl[D(E(\m) + \delta) \neq \fail{}\bigr] \le \varepsilon,
    \]
    where the probability is taken over the randomness of the encoder $E$.
  \end{itemize}
\end{definition}

\begin{definition}
  \label{def:amd_systematic}
  \cite{cramer2013algebraic}
  An $(M, N, \varepsilon)$-AMD code $(E, D)$ is called \concept{systematic} if the source set $S$ is a group and the encoding function $E$ has the form
  \[
    E : S \to G = S \times G_1 \times G_2,\quad
    s \longmapsto (s, x, g(x, s)),
  \]
  where $G_1$ and $G_2$ are groups, the element $x \in G_1$ is chosen uniformly at random, and $g : G_1 \times S \to G_2$ is a deterministic function.
\end{definition}
Denote $N_1 = |G_1|$ and $N_2 = |G_2|$, so that $N = |G| = M N_1 N_2$. The next lemma gives a general lower bound on the parameters of systematic AMD codes.

\begin{lemma}
  \cite{cramer2013algebraic}
  \label{lem:amd_systematic_bound}
  For any systematic $(M, N, \varepsilon)$-AMD code, one has
  \[
    N_1 \ge \frac{1}{\varepsilon}
    \quad\text{and}\quad
    N_2 \ge \frac{1}{\varepsilon}.
  \]

\end{lemma}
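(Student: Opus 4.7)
The plan is to prove each bound by exhibiting one explicit modification $\delta \in G \setminus \{0\}$ whose non-$\fail{}$ probability over the hidden randomness $x$ can be lower bounded, and then to invoke the AMD security condition $\Pr_x[D(E(\m) + \delta) \ne \fail{}] \le \varepsilon$. Throughout I would fix some $\delta_0 \in S \setminus \{0\}$ (available because $M \ge 2$), which automatically makes any $\delta$ with first coordinate $\delta_0$ nonzero; I will also use that correctness forces $D(s', x', g(x', s')) = s'$ for every $(s', x') \in S \times G_1$.

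For $N_2 \ge 1/\varepsilon$ I would use the family of ``tag-perturbation'' attacks $\delta^{(\delta_2)} = (\delta_0, 0, \delta_2)$ indexed by $\delta_2 \in G_2$. For every fixed $x$, the modified triple $(s + \delta_0, x, g(x, s) + \delta_2)$ equals the honest encoding of $s + \delta_0$ at randomness $x$ exactly when $\delta_2 = g(x, s + \delta_0) - g(x, s)$, so correctness guarantees that at least one value of $\delta_2$ causes the decoder to accept. Summing over $\delta_2$ yields $\sum_{\delta_2 \in G_2} \Pr_x[D \ne \fail{}] \ge 1$, so by pigeonhole some $\delta_2^{*}$ achieves success probability at least $1/N_2$, and the security condition then forces $\varepsilon \ge 1/N_2$.

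The main obstacle is the bound $N_1 \ge 1/\varepsilon$, because naively averaging an analogous family of attacks over $\delta_1 \in G_1$ does not deliver a $1/N_1$ lower bound. I would instead use a single-point calibration: pick any $x_0 \in G_1$ in advance and set $\delta_2 := g(x_0, s + \delta_0) - g(x_0, s)$, yielding $\delta := (\delta_0, 0, \delta_2) \ne 0$. Since $\delta_2$ depends only on public data ($s$, $\delta_0$, $g$, $x_0$) and not on the secret randomness $x$, this is a legitimate attack. On the single event $x = x_0$ the modified triple becomes $(s + \delta_0, x_0, g(x_0, s + \delta_0))$, which decodes to $s + \delta_0 \ne \fail{}$ by correctness; since $x$ is uniform on $G_1$, that one event already yields $\Pr_x[D \ne \fail{}] \ge 1/N_1$, and the security condition forces $\varepsilon \ge 1/N_1$, completing the proof.
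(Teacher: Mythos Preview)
The paper does not give its own proof of this lemma: it is simply quoted from \cite{cramer2013algebraic} and used as a black box, so there is no in-paper argument to compare against. That said, your proposal is a correct self-contained proof, and each step goes through as written.

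A few comments. Your $N_2$ argument is exactly the standard one: average the success indicator over all tag shifts $\delta_2$, observe that for every fixed $x$ precisely one $\delta_2$ makes the modified triple an honest encoding of $s+\delta_0$, and apply pigeonhole. Your $N_1$ argument is slightly nonstandard but perfectly valid: rather than averaging over shifts in the $G_1$ coordinate, you keep the $G_1$ shift equal to $0$ and instead calibrate $\delta_2$ to a single target randomness $x_0$, so the attack succeeds whenever the encoder happens to draw $x=x_0$, which occurs with probability $1/N_1$. The only point worth making explicit is why you are allowed to let $\delta$ depend on $s$: the AMD security clause in Definition~\ref{def:amd} is ``for all $\m$ and all $\delta$'', so exhibiting a single pair $(\m,\delta)$ with $\Pr_x[D(E(\m)+\delta)\ne\fail{}]\ge 1/N_1$ (respectively $1/N_2$) already forces $\varepsilon\ge 1/N_1$ (respectively $1/N_2$). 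You state this correctly, but it is the crux of why the $s$-dependent choice of $\delta_2$ is legitimate.
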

We now recall an explicit optimal construction of systematic AMD codes.

\begin{lemma}
  \cite{cramer2013algebraic}
  \label{lem:amd_construction}
  Let $\m = (m_0, m_1, \ldots, m_{k-1}) \in \Fqk$, let $p$ be the characteristic of $\Fq$, and assume $p \nmid (k+2)$. Define
  \[
    g(x, \m) = x^{k+2} + \sum_{i=1}^{k} m_{i-1} x^i.
  \]

  Then $g$ yields a systematic $(q^k, q^{k+2}, (k+1)/q)$-AMD code.
\end{lemma}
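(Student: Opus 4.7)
The plan is to verify the two defining properties of a systematic AMD code. First I would specify the decoder implicit in the statement: given a triple $(\mpr, x', y) \in \Fqk \times \Fq \times \Fq$, let $D(\mpr, x', y) = \mpr$ if $y = g(x', \mpr)$, and $D(\mpr, x', y) = \fail$ otherwise. Correctness $D(E(\m)) = \m$ is immediate from the definition of $g$, and the shape $E(\m) = (\m, x, g(x, \m))$ with $x$ drawn uniformly from $\Fq$ matches Definition~\ref{def:amd_systematic} with $G_1 = G_2 = \Fq$. This also pins down the parameters $M = q^k$, $N_1 = N_2 = q$, and $N = q^{k+2}$.

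The real content is bounding the detection probability. Fix $\m \in \Fqk$ and a nonzero perturbation $\delta = (\delta_m, \delta_x, \delta_g) \in \Fqk \times \Fq \times \Fq$. The decoder accepts the tampered triple $E(\m) + \delta$ exactly when
\[
  P(x) \;:=\; g(x + \delta_x,\, \m + \delta_m) - g(x, \m) - \delta_g \;=\; 0,
\]
so I would upper-bound the probability over uniform $x \in \Fq$ that $P(x) = 0$ by the degree of $P$ divided by $q$, via the standard fact that a nonzero univariate polynomial of degree $D$ over $\Fq$ has at most $D$ roots.

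The analysis splits into two cases. If $\delta_x = 0$, then $P(x) = \sum_{i=1}^{k} (\delta_m)_{i-1}\, x^i - \delta_g$. When $\delta_m = 0$, the nonzero $\delta_g$ forces $P$ to be a nonzero constant, so no $x$ succeeds; otherwise $P$ is a nonzero polynomial of degree at most $k$, bounded by $k/q \le (k+1)/q$ roots. If $\delta_x \neq 0$, I expand
\[
  (x + \delta_x)^{k+2} - x^{k+2} = (k+2)\,\delta_x\, x^{k+1} + (\text{lower order in } x),
\]
and since $p \nmid (k+2)$ and $\delta_x \neq 0$, the coefficient of $x^{k+1}$ in $P$ is $(k+2)\,\delta_x \neq 0$. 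The remaining contributions from $\sum_{i=1}^{k} \bigl((m_{i-1} + (\delta_m)_{i-1})(x+\delta_x)^i - m_{i-1} x^i\bigr)$ have degree at most $k < k+1$, so $\deg P = k+1$, giving the bound $(k+1)/q$.

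The only delicate step is showing that the top-degree coefficient of $P$ is genuinely nonzero when $\delta_x \neq 0$, which is why the hypothesis $p \nmid (k+2)$ appears; without it, $P$ could collapse to a polynomial of much smaller degree and the bound would fail. Everything else is bookkeeping: combining the two cases yields $\Pr_x[D(E(\m) + \delta) \neq \fail] \le (k+1)/q$, so $(E, D)$ is a systematic $(q^k, q^{k+2}, (k+1)/q)$-AMD code as claimed. (Note that this meets the lower bound of Lemma~\ref{lem:amd_systematic_bound} up to the factor $k+1$, confirming near-optimality.)
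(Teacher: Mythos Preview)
The paper does not supply its own proof of this lemma; it simply cites it from \cite{cramer2013algebraic} as a known result. Your argument is correct and is precisely the standard proof from that reference: reduce the manipulation-detection event to a root-finding problem for the univariate polynomial $P(x) = g(x+\delta_x,\m+\delta_m) - g(x,\m) - \delta_g$, then case-split on $\delta_x$ and use the hypothesis $p \nmid (k+2)$ to ensure the $x^{k+1}$ coefficient survives when $\delta_x \neq 0$.
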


This construction attains the lower bounds of Lemma~\ref{lem:amd_systematic_bound} with equality, and is therefore optimal among systematic AMD codes.

\section{Efficient Decoding of TGRS codes}

\label{sec:decoding_tgrs_codes}

In this section, we present a list decoding algorithm for TGRS codes, analyze its decoding radius, list size and complexity, then derive a fast unique decoding algorithm based on it.

Observe that TGRS codes are polynomial evaluation codes. Consequently, any TGRS code can be regarded as a subcode of a suitably chosen GRS code. Let $\mC_{\TGRS}(\bGa, \bv, k, \mT)$ be a TGRS code with twists $\mT = \{(t_i, h_i, \eta_i) : 1 \leq i \leq \ell\}$. We define

\[
  k' = k + \max_{1 \le i \le \ell} t_i
\]

and refer to $k'$ as the \concept{pseudo-dimension} of $\mC_{\TGRS}(\bGa, \bv, k, \mT)$. It is immediate that

\[\mP_{\TGRS}(k, \mT) \subseteq \Fq[x]_{<k'}.\]

Therefore, $\mC_{\TGRS}(\bGa, \bv, k, \mT)$ is a subcode of $\mC_{\GRS}(\bGa, \bv, k')$. This observation is formalized in the following lemma.

\begin{lemma}
  \label{lem:tgrs_subcode_grs}
  Let $\mC_{\TGRS}(\bGa, \bv, k, \mT)$ be a TGRS code with pseudo-dimension $k'$. Then
  $\mP_{\TGRS}(k, \mT) \subseteq \Fq[x]_{<k'}$, and $\mC_{\TGRS}(\bGa, \bv, k, \mT)$ is a subcode of the GRS code $\mC_{\GRS}(\bGa, \bv, k')$.
\end{lemma}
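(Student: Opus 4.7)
The plan is to unpack the definition of the twisted polynomial space $\mP_{\TGRS}(k, \mT)$ and of the pseudo-dimension $k'$, verify the claimed degree bound on every polynomial in $\mP_{\TGRS}(k, \mT)$, and then push the resulting set inclusion through the evaluation map $\evav$.

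First, I would take an arbitrary element $f \in \mP_{\TGRS}(k, \mT)$ written in its canonical form from Definition~\ref{def:tgrs} as
\[
  f = \sum_{i=0}^{k-1} f_i x^i + \sum_{j=1}^{\ell} \eta_j f_{h_j} x^{k-1+t_j},
\]
and bound the degree of each monomial that appears. The untwisted summands have degree at most $k-1$. For the twist monomials $x^{k-1+t_j}$, the defining choice $k' = k + \max_{1 \le i \le \ell} t_i$ gives the bound $k - 1 + t_j \le k - 1 + \max_i t_i = k' - 1 < k'$. Hence every monomial in $f$ has degree strictly less than $k'$, which establishes $\mP_{\TGRS}(k, \mT) \subseteq \Fq[x]_{<k'}$.

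Second, because $\evav$ is a function, applying it to both sides of this set inclusion yields
\[
  \mC_{\TGRS}(\bGa, \bv, k, \mT) = \evav(\mP_{\TGRS}(k, \mT)) \subseteq \evav(\Fq[x]_{<k'}) = \mC_{\GRS}(\bGa, \bv, k'),
\]
which is the second assertion of the lemma. There is no substantive obstacle here: the proof is a direct unpacking of the definitions, and the only point that requires care is recognizing that $k' = k + \max_i t_i$ is the smallest dimension for which the ambient GRS code contains every twist monomial $x^{k-1+t_j}$. The real purpose of the statement is to fix $k'$ as this minimal ambient parameter, so that subsequent sections may invoke the Guruswami--Sudan decoder on $\mC_{\GRS}(\bGa, \bv, k')$ and then filter its output list by the twist constraints to recover TGRS codewords.
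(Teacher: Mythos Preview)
Your proposal is correct and matches the paper's approach exactly: the paper presents this lemma as a formalization of the observation immediately preceding it, noting that $\mP_{\TGRS}(k,\mT)\subseteq\Fq[x]_{<k'}$ is immediate from the definition of $k'$ and that the subcode relation then follows by applying $\evav$. Your degree-bound computation and subsequent push-forward through the evaluation map are precisely this argument spelled out in full.
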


\subsection{Algorithm}

In this subsection, we describe the procedure for list decoding TGRS codes using the Guruswami--Sudan algorithm. Given a received word, we first run the Guruswami--Sudan algorithm for the code $\mC_{\GRS}(\bGa, \bv, k')$ and obtain a list of candidate polynomials. We then filter this list by retaining only those polynomials that lie in $\mP_{\TGRS}(k, \mT)$, yielding a list $\mL$. We will show that, when the Guruswami--Sudan algorithm is invoked with decoding radius $\tau$, the list $\mL$ consists exactly of all polynomials whose corresponding codewords are within distance $\tau$ of the received word. The procedure is given in Algorithm~\ref{alg:list_dec_tgrs}.

\begin{alg}
  \label{alg:list_dec_tgrs}

  List decoding algorithm for the TGRS code $\mC_{\TGRS}(\bGa, \bv, k, \mT)$.

  Input: A received word $\br \in \Fq^n$ and a decoding radius $\tau$.

  Output: A list $\mL$ of polynomials $f \in \mP_{\TGRS}(k, \mT)$ or \fail{}.

  \begin{enumerate}
    \item Compute the pseudo-dimension $k'$ of $\mC_{\TGRS}(\bGa, \bv, k, \mT)$.
    
    \item Run the Guruswami--Sudan algorithm on the GRS code $\mC_{\GRS}(\bGa, \bv, k')$ with received word $\br$ and decoding radius $\tau$. If the algorithm outputs \fail{}, output \fail{}. Otherwise, let $\mL_1$ be the list of candidate polynomials in $\Fqx_{<k'}$ returned by the algorithm.
    \label{step:list_dec_tgrs__run_gs}

    \item Initialize $\mL$ as the empty list. For each polynomial
    \[
      f = f_0 + \cdots + f_{k-1} x^{k-1} + f_k x^k + \cdots + f_{k'-1} x^{k'-1} \in \mL_1,
    \]
    compute
    \[
      f' = f_0 + \cdots + f_{k-1} x^{k-1}
           + \sum_{j=1}^{\ell} \eta_j f_{h_j} x^{k-1+t_j}.
    \]

    If $f = f'$, append $f$ to $\mL$.
    \label{step:list_dec_tgrs__check_tgrs}

    \item Output the list $\mL$.
  \end{enumerate}
\end{alg}

\subsection{Decoding Radius and List Size}

\begin{theorem}
  \label{thm:list_dec_tgrs__correctness}
  Let $\mC_{\TGRS}(\bGa, \bv, k, \mT)$ be a TGRS code with pseudo-dimension $k'$.
  Given a received word $\br$ and a decoding radius $\tau$ satisfying $\tau < n - \sqrt{n (k'-1)}$,
  Algorithm~\ref{alg:list_dec_tgrs} outputs a list $\mL$ such that
  \[
    \mL
    = \bigl\{ f \in \mP_{\TGRS}(k, \mT) : \rd\bigl(\br, \evav(f)\bigr) \le \tau \bigr\}.
  \]

  In particular, for any fixed multiplicity $s$, Algorithm~\ref{alg:list_dec_tgrs} succeeds whenever
  \[
    \tau < n - \lfloor \sqrt{n(k'-1)(1+1/s)}\rfloor.
  \]
\end{theorem}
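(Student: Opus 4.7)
The plan is to reduce the correctness of Algorithm~\ref{alg:list_dec_tgrs} to two ingredients already established in the paper: the subcode containment $\mP_{\TGRS}(k, \mT) \subseteq \Fq[x]_{<k'}$ from Lemma~\ref{lem:tgrs_subcode_grs}, and the correctness guarantee of the Guruswami--Sudan algorithm (Theorem~\ref{thm:guruswami_sudan}) applied to the ambient GRS code $\mC_{\GRS}(\bGa, \bv, k')$. The proof is then a clean double-inclusion argument, so I would structure it as two containments of the output list, followed by a one-line remark for the multiplicity statement.

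First I would verify the inclusion $\mL \subseteq \{f \in \mP_{\TGRS}(k, \mT) : \rd(\br, \evav(f)) \le \tau\}$. Any $f \in \mL$ was appended in step~\ref{step:list_dec_tgrs__check_tgrs}, which requires $f = f'$; unpacking the construction of $f'$, this equality forces $f_i = 0$ for every $i \in \{k, k+1, \ldots, k'-1\}$ except at positions $k-1+t_j$, where the coefficient must equal $\eta_j f_{h_j}$. Hence $f \in \mP_{\TGRS}(k, \mT)$. Moreover, $f$ came from $\mL_1$, which by Theorem~\ref{thm:guruswami_sudan} (applicable because $\tau < n - \sqrt{nk'}$) consists precisely of polynomials in $\Fq[x]_{<k'}$ whose evaluation under $\evav$ lies within Hamming distance $\tau$ of $\br$.

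Next I would prove the reverse inclusion. Given any $f \in \mP_{\TGRS}(k, \mT)$ with $\rd(\br, \evav(f)) \le \tau$, Lemma~\ref{lem:tgrs_subcode_grs} gives $f \in \Fq[x]_{<k'}$, and Theorem~\ref{thm:guruswami_sudan} guarantees $f \in \mL_1$ since $\tau < n - \sqrt{nk'}$. It remains to show that $f$ survives the filter in step~\ref{step:list_dec_tgrs__check_tgrs}, i.e.\ that the reconstructed polynomial $f'$ equals $f$. This is immediate from the definition of $\mP_{\TGRS}(k, \mT)$: writing $f = \sum_{i=0}^{k-1} f_i x^i + \sum_{j=1}^{\ell} \eta_j f_{h_j} x^{k-1+t_j}$, every coefficient of $f$ at degrees $\ge k$ that is not of the form $k-1+t_j$ vanishes, while the twist coefficients satisfy the required identity, so $f' = f$ by construction and $f$ is appended to $\mL$.

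Finally, the auxiliary claim on the decoding radius under a fixed multiplicity $s$ is not a separate argument: Theorem~\ref{thm:guruswami_sudan} already states that the Guruswami--Sudan algorithm with multiplicity $s$ succeeds whenever $\tau \le n - \sqrt{nk(1+1/s)}$, and we have only replaced $k$ with the pseudo-dimension $k'$ in invoking it on $\mC_{\GRS}(\bGa, \bv, k')$. I do not anticipate any real obstacle in this proof; the only thing worth being careful about is the bookkeeping in step~\ref{step:list_dec_tgrs__check_tgrs}, to ensure that when two indices coincide (e.g.\ $k-1+t_{j_1} = k-1+t_{j_2}$ is ruled out by the pairwise distinctness of the $(t_i,h_i)$, but collisions with the $< k$ range cannot occur because $t_i \ge 1$) the equality $f = f'$ is genuinely equivalent to membership in $\mP_{\TGRS}(k, \mT)$.
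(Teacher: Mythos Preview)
Your proof is correct and follows essentially the same route as the paper: characterize $\mL_1$ via Theorem~\ref{thm:guruswami_sudan}, show that the filter $f = f'$ in Step~\ref{step:list_dec_tgrs__check_tgrs} is equivalent to $f \in \mP_{\TGRS}(k,\mT)$, and combine with Lemma~\ref{lem:tgrs_subcode_grs}. One small slip in your closing parenthetical: pairwise distinctness of the pairs $(t_i,h_i)$ does \emph{not} rule out $t_{j_1} = t_{j_2}$ (it only forces $h_{j_1} \neq h_{j_2}$ in that case), so the coefficient of $x^{k-1+t}$ in $f'$ is in general $\sum_{j:\,t_j=t}\eta_j f_{h_j}$; this does not affect your argument, since the point you actually need---that $f'$ is the unique element of $\mP_{\TGRS}(k,\mT)$ sharing $f$'s first $k$ coefficients---still holds.
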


\begin{proof}
  Let $\mL_1$ denote the list returned by the Guruswami--Sudan algorithm in
  Step~\ref{step:list_dec_tgrs__run_gs} of Algorithm~\ref{alg:list_dec_tgrs}.
  By Theorem~\ref{thm:guruswami_sudan},
  \[
    \mL_1
    = \bigl\{ f \in \Fqx_{<k'} : \rd\bigl(\evav(f), \br\bigr) \le \tau \bigr\}.
  \]

  In Step~\ref{step:list_dec_tgrs__check_tgrs}, for any polynomial $f = f_0 + \cdots + f_{k'-1} x^{k'-1} \in \mL_1$, the algorithm inserts $f$ into $\mL$ if and only if $f$ coincides with
  \[
    f'
    = f_0 + \cdots + f_{k-1} x^{k-1}
    + \sum_{j=1}^{\ell} \eta_j f_{h_j} x^{k-1+t_j}.
  \]

  By construction, $f' \in \mP_{\TGRS}(k,\mT)$. Hence, if $f$ is added to $\mL$,
  then necessarily $f=f' \in \mP_{\TGRS}(k,\mT)$.

  Conversely, assume $f \in \mP_{\TGRS}(k,\mT)$. By Definition~\ref{def:tgrs},
  the twist terms contribute only monomials of degree at least $k$. Therefore,
  every polynomial in $\mP_{\TGRS}(k,\mT)$ is uniquely determined by its first $k$
  coefficients $f_0,\ldots,f_{k-1}$ and must have the form
  \[
    f
    = f_0 + \cdots + f_{k-1} x^{k-1}
    + \sum_{j=1}^{\ell} \eta_j f_{h_j} x^{k-1+t_j}.
  \]

  In particular, $f=f'$, and thus $f$ is added to $\mL$ in
  Step~\ref{step:list_dec_tgrs__check_tgrs}. Consequently,
  \[
    f \text{ is added to } \mL
    \quad\Longleftrightarrow\quad
    f \in \mP_{\TGRS}(k,\mT).
  \]

  It follows that
  \[
    \begin{split}
      \mL ={} & \bigl\{ f \in \mL_1 : f \in \mP_{\TGRS}(k,\mT) \bigr\} \\
      ={} & \bigl\{ f \in \Fqx_{<k'} : \rd\bigl(\evav(f), \br\bigr) \le \tau \bigr\} 
         \cap \mP_{\TGRS}(k,\mT).
    \end{split}
  \]

  By Lemma~\ref{lem:tgrs_subcode_grs}, we have $\mP_{\TGRS}(k,\mT) \subseteq \Fqx_{<k'}$,
  and hence
  \[
    \mL
    = \bigl\{ f \in \mP_{\TGRS}(k,\mT) : \rd\bigl(\evav(f), \br\bigr) \le \tau \bigr\}.
  \]

  Finally, for fixed $s$, the admissible choice
  $\tau < n - \lfloor \sqrt{n(k'-1)(1+1/s)}\rfloor$ follows directly from Theorem~\ref{thm:guruswami_sudan}.
\end{proof}

Since Algorithm~\ref{alg:list_dec_tgrs} applies the Guruswami--Sudan algorithm then filters the resulting candidates, its output list is necessarily a sublist of the Guruswami--Sudan output. Consequently, the list size of our list decoding algorithm follows immediately from Lemma~\ref{lem:bound_on_list_size}.

\begin{proposition}
  \label{prop:list_size_tgrs}
  Let $\mL$ denote the list returned by Algorithm~\ref{alg:list_dec_tgrs} when decoding the TGRS code $\mC_{\TGRS}(\bGa,\bv,k,\mT)$ with length $n$ and pseudo-dimension $k'$. Then
  \[
    |\mL| \le \sqrt{ns(s+1)/(k'-1)},
  \]
  where $s$ is the multiplicity in Guruswami--Sudan algorithm. In particular, for TGRS codes of fixed rate $R = k/n$ and fixed constant multiplicity $s$, Algorithm~\ref{alg:list_dec_tgrs} output a list of constant size.
\end{proposition}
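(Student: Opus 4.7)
The plan is to reduce the claim directly to Lemma~\ref{lem:bound_on_list_size}, using the structure of Algorithm~\ref{alg:list_dec_tgrs} in an essential but very light way. First, I would observe that $\mL$ arises purely by filtering: Step~\ref{step:list_dec_tgrs__check_tgrs} only deletes elements from the list $\mL_1$ returned by the Guruswami--Sudan invocation in Step~\ref{step:list_dec_tgrs__run_gs}. Consequently $\mL \subseteq \mL_1$, and in particular $|\mL| \le |\mL_1|$.

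Next, I would invoke Lemma~\ref{lem:bound_on_list_size}, taking care that the GRS code being list-decoded in Step~\ref{step:list_dec_tgrs__run_gs} is $\mC_{\GRS}(\bGa, \bv, k')$, so the parameter to plug into the bound is the pseudo-dimension $k'$, not $k$. This immediately gives $|\mL_1| \le s\sqrt{n/k'}$ with the same multiplicity $s$ used by the underlying Guruswami--Sudan algorithm, yielding the stated bound $|\mL| \le s\sqrt{n/k'}$.

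For the "in particular" statement on fixed-rate TGRS codes, I would use the definition of the pseudo-dimension, namely $k' = k + \max_{1 \le i \le \ell} t_i \ge k$, so that
\[
  |\mL| \le s\sqrt{n/k'} \le s\sqrt{n/k} = s/\sqrt{R},
\]
which is a constant depending only on $s$ and $R$. This finishes the argument.

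There is essentially no hard step here; the proof is a two-line reduction to an already-cited GRS bound. The only place where one might slip is inadvertently applying Lemma~\ref{lem:bound_on_list_size} with parameter $k$ instead of $k'$, so in the write-up I would be explicit that the Guruswami--Sudan subroutine is being run on the larger GRS code of dimension $k'$.
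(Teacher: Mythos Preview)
Your proposal is correct and matches the paper's own argument essentially line for line: the paper also observes that $\mL$ is obtained by filtering the Guruswami--Sudan output $\mL_1$, so $|\mL|\le|\mL_1|$, and then applies Lemma~\ref{lem:bound_on_list_size} to the GRS code $\mC_{\GRS}(\bGa,\bv,k')$. Your explicit handling of the ``in particular'' clause via $k'\ge k$ is slightly more detailed than the paper, which leaves this step implicit.
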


\subsection{Complexity Analysis}

We now analyze the time complexity of Algorithm~\ref{alg:list_dec_tgrs}. We first bound the cost of Step~\ref{step:list_dec_tgrs__check_tgrs}, where we test whether a candidate polynomial returned by the Guruswami--Sudan algorithm satisfies the twist constraints, i.e., whether it is a valid polynomial in $\mP_{\TGRS}(k,\mT)$.

\begin{lemma}
  \label{lem:check_tgrs_complexity}
  Let $\mC_{\TGRS}(\bGa,\bv,k,\mT)$ be a TGRS code with pseudo-dimension $k'$ and $\ell$ twists. Then Step~\ref{step:list_dec_tgrs__check_tgrs} of Algorithm~\ref{alg:list_dec_tgrs} can be implemented in time
  \[
    O\!\left(s\sqrt{n/k'} \cdot \max(\ell,\,k'-k)\right),
  \]
  where $s$ is the multiplicity in Guruswami--Sudan algorithm.
\end{lemma}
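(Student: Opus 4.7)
The plan is to bound the work required to process each polynomial in $\mL_1$ and then multiply by a bound on $|\mL_1|$. Since Step~\ref{step:list_dec_tgrs__check_tgrs} only scans the list $\mL_1$ returned by the Guruswami--Sudan algorithm in Step~\ref{step:list_dec_tgrs__run_gs}, we may apply Lemma~\ref{lem:bound_on_list_size} to $\mC_{\GRS}(\bGa,\bv,k')$ to get $|\mL_1| \le s\sqrt{n/k'}$. It therefore suffices to show that each individual polynomial can be tested in time $O(\max(\ell,\,k'-k))$.

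First I would observe that by construction $f$ and $f'$ agree on the coefficients of $x^0, x^1, \dots, x^{k-1}$, so the test $f = f'$ is equivalent to checking only the $k'-k$ coefficients in degrees $k, k+1, \dots, k'-1$. Recall that $k' - k = \max_i t_i$, so every twist monomial $x^{k-1+t_j}$ lies in this range. Hence I would rephrase the test without explicitly forming $f'$: verify that $[x^{k-1+t_j}] f = \eta_j\,[x^{h_j}] f$ for every $j = 1,\dots,\ell$, and that $[x^i] f = 0$ for every $i \in \{k,\dots,k'-1\}$ not of the form $k-1+t_j$.

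The cost of this test is straightforward. One pass over the $\ell$ twists performs $\ell$ multiplications and comparisons, costing $O(\ell)$. A second pass over the $k'-k$ high-degree coefficient slots checks the remaining ``must-be-zero'' positions (for instance by first marking the twist positions and then scanning), costing $O(k'-k)$. Combining the two passes gives $O(\ell + (k'-k)) = O(\max(\ell,\,k'-k))$ per polynomial.

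Summing over all polynomials in $\mL_1$ and using $|\mL_1| \le s\sqrt{n/k'}$ yields the claimed bound
\[
  O\!\left(s\sqrt{n/k'} \cdot \max(\ell,\,k'-k)\right).
\]
There is no real obstacle here; the only subtlety is noting that checking ``$f \in \mP_{\TGRS}(k,\mT)$'' requires looking at all $k'-k$ high-degree positions (not merely the $\ell$ twist positions), since one must also certify that the non-twist positions in $[k, k'-1]$ carry a zero coefficient, which is exactly what produces the $\max(\ell, k'-k)$ term rather than a plain $\ell$.
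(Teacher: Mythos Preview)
Your proof is correct and follows essentially the same approach as the paper: bound $|\mL_1|$ by $s\sqrt{n/k'}$ and show each candidate can be checked in $O(\ell)+O(k'-k)=O(\max(\ell,k'-k))$ time. One minor caveat: your rephrased test ``$[x^{k-1+t_j}]f=\eta_j[x^{h_j}]f$ for every $j$'' tacitly assumes the $t_j$ are distinct, whereas Definition~\ref{def:tgrs} only requires the pairs $(t_j,h_j)$ to be distinct---accumulating twist contributions into each degree slot (as the paper does by actually forming $f'$) handles repeated $t_j$ without changing the cost.
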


\begin{proof}
  By Proposition~\ref{prop:list_size_tgrs}, the list size satisfies $\lvert \mL_1 \rvert \le \sqrt{ns(s+1)/(k'-1)} = O(s\sqrt{n/k'})$. For a fixed candidate $f$,
  we construct
  \[
    f' = f_0 + \cdots + f_{k-1} x^{k-1}
         + \sum_{j=1}^\ell \eta_j f_{h_j} x^{k-1+t_j},
  \]
  which requires forming $\ell$ twisted contributions. We then compare the
  coefficients of degree $\ge k$ in $f$ and $f'$, i.e., at most $k'-k$
  coefficient comparisons. Thus processing a single candidate takes $O\bigl(\max(\ell,\,k'-k)\bigr)$ time, and processing all candidates in $\mL_1$ costs
  \[
    O\!\left(s\sqrt{n/k'} \cdot \max(\ell,\,k'-k)\right).
  \]
\end{proof}

In Step~\ref{step:list_dec_tgrs__run_gs}, we invoke the Guruswami--Sudan algorithm to decode the corresponding GRS code. By Theorem~\ref{thm:guruswami_sudan}, this step can be implemented in
\[
  O\!\left(\Bigl(s\frac{n}{k'}\Bigr)^{O(1)}\bigl(n\log^{2} n \log\log n\bigr)\right)
\]
time. Combining the running times of Step~\ref{step:list_dec_tgrs__run_gs} and Step~\ref{step:list_dec_tgrs__check_tgrs}, we obtain the overall complexity of Algorithm~\ref{alg:list_dec_tgrs} as stated below.

\begin{theorem}
  \label{thm:list_dec_tgrs__complexity}
  Let $\mC_{\TGRS}(\bGa,\bv,k,\mT)$ be a TGRS code of length $n$ over $\Fq$ with pseudo-dimension $k'$. Then Algorithm~\ref{alg:list_dec_tgrs} runs in time
  \[
    O\!\left(\Bigl(s\frac{n}{k'}\Bigr)^{O(1)}\bigl(n\log^2 n\log\log n + \ell\bigr)\right),
  \]
  where $s$ is the multiplicity in Guruswami--Sudan algorithm.
  In particular, for fixed-rate TGRS codes and fixed constant multiplicity $s$, Algorithm~\ref{alg:list_dec_tgrs} runs in time

  \[
    O\bigl(n \log^2 n \log\log n + \ell\bigr).
  \]

\end{theorem}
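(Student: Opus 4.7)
The plan is to sum the costs of the three individual steps of Algorithm~\ref{alg:list_dec_tgrs} and verify that they collapse into the claimed bound. First, I would note that computing $k' = k + \max_{i} t_i$ amounts to scanning the $\ell$ triples in $\mT$, hence runs in time $O(\ell)$; this is trivially absorbed. Second, for Step~\ref{step:list_dec_tgrs__run_gs}, I would invoke Theorem~\ref{thm:guruswami_sudan} on the GRS code $\mC_{\GRS}(\bGa, \bv, k')$, which has length $n$ and dimension $k'$, to conclude that this step runs in time $O\bigl((sn/k')^{O(1)}\, n\log^2 n \log\log n\bigr)$. Third, for Step~\ref{step:list_dec_tgrs__check_tgrs}, Lemma~\ref{lem:check_tgrs_complexity} provides the bound $O\bigl(s\sqrt{n/k'}\cdot\max(\ell, k'-k)\bigr)$.

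The only place that requires a small verification (not really an obstacle) is confirming that the filtering cost is dominated by the joint bound. I would argue as follows. Since $\max(\ell, k'-k) \le \ell + (k'-k)$, the filtering cost splits into two parts: a term $O\bigl(s\sqrt{n/k'}\cdot \ell\bigr)$, which is absorbed into $(sn/k')^{O(1)}\cdot \ell$ because $s\sqrt{n/k'} \le sn/k'$; and a term $O\bigl(s\sqrt{n/k'}\cdot (k'-k)\bigr) \le O\bigl(s\sqrt{nk'}\bigr)$, which since $k' \le n$ (as $t_i \le n-k$ forces $k' \le n$) is at most $O(sn)$ and therefore absorbed into the $O\bigl((sn/k')^{O(1)}\,n\log^2 n \log\log n\bigr)$ cost of the Guruswami--Sudan call. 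Adding the three contributions then yields the first stated complexity
\[
  O\!\left(\Bigl(s\,\tfrac{n}{k'}\Bigr)^{O(1)}\bigl(n\log^2 n\log\log n + \ell\bigr)\right).
\]

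For the specialization, I would observe that under a fixed rate $R = k/n$ one has $k'/n \ge k/n = R$, so the ratio $n/k'$ is bounded above by $1/R = O(1)$; together with the assumption that $s$ is a fixed constant, the factor $(sn/k')^{O(1)}$ reduces to a constant, and the overall complexity simplifies to $O\bigl(n\log^2 n \log\log n + \ell\bigr)$, as claimed.
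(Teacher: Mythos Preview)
Your proposal is correct and follows essentially the same approach as the paper: bound Step~\ref{step:list_dec_tgrs__run_gs} via Theorem~\ref{thm:guruswami_sudan}, bound Step~\ref{step:list_dec_tgrs__check_tgrs} via Lemma~\ref{lem:check_tgrs_complexity}, and then absorb the $(k'-k)$ portion of the filtering cost into the Guruswami--Sudan term using $s\sqrt{n/k'}\,(k'-k)\le s\sqrt{nk'}\le sn$. The paper does not bother to account separately for the $O(\ell)$ cost of computing $k'$, but your observation that it is absorbed is of course correct.
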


\begin{proof}
  By Theorem~\ref{thm:guruswami_sudan}, 
  Step~\ref{step:list_dec_tgrs__run_gs} of Algorithm~\ref{alg:list_dec_tgrs}
  can be implemented in time $O\!\left((s\frac{n}{k'})^{O(1)} n\log^2 n\log\log n\right)$.

  By Lemma~\ref{lem:check_tgrs_complexity}, 
  Step~\ref{step:list_dec_tgrs__check_tgrs} costs
    $O\!\left(s\sqrt{n/k'} \cdot \max(\ell,\,k'-k)\right)$.
  Moreover, $s\sqrt{n/k'}\,(k'-k) \le s\sqrt{n/k'}\cdot k' \le sn$,
  and hence this term is dominated by the interpolation and root-finding cost in
  Step~\ref{step:list_dec_tgrs__run_gs} and can be absorbed into
  $O\!\left((s\frac{n}{k'})^{O(1)}\,n\log^2 n\log\log n\right)$.
  
  Therefore, the overall running time is
  \[
    \begin{split}
      & O\!\left(\Bigl(s\frac{n}{k'}\Bigr)^{O(1)} n\log^2 n\log\log n\right) + O\!\left(s\sqrt{n/k'}\,\ell\right) \\
      ={} & O\!\left(\Bigl(s\frac{n}{k'}\Bigr)^{O(1)}\bigl(n\log^2 n\log\log n + \ell\bigr)\right).
    \end{split}
  \]

  For fixed-rate TGRS codes, we have $n/k' = O(1)$, and thus the factor
  $\bigl(s\frac{n}{k'}\bigr)^{O(1)}$ is a constant when $s=O(1)$; it can be absorbed into the
  big-$O$ notation. This yields the simplified bound
  \[
    O\bigl(n\log^2 n\log\log n + \ell\bigr).
  \]
\end{proof}

\subsection{Unique Decoding of MDS TGRS Codes}

Building on the proposed list-decoding algorithm, we further derive a unique-decoding procedure. Compared with existing methods, the resulting decoder achieves an improved running time of $O(n\log^2 n\log\log n)$ for fixed-rate MDS TGRS codes and fixed constant multiplicity $s$. Within this complexity budget, the algorithm can accommodate up to $O(n\log^2 n\log\log n)$ twists, which is asymptotically larger than what is supported by previous approaches.

To achieve the optimal unique-decoding radius, however, the pseudo-dimension must be suitably constrained. We therefore characterize the parameter regime under which optimal unique decoding is guaranteed. Our analysis shows that, compared with prior work, the proposed algorithm achieves the optimal unique-decoding radius for a broader range of pseudo-dimensions, and it also supports a larger number of twists, thereby providing more flexible parameter choices.

\begin{alg}
  \label{alg:unique_dec_tgrs}

  Unique decoding algorithm for the MDS TGRS code $\mC_{\TGRS}(\bGa, \bv, k, \mT)$.

  Input: A received word $\br \in \Fq^n$.

  Output: A polynomial $f$ or \fail{}.

  \begin{enumerate}
    \item Run Algorithm~\ref{alg:list_dec_tgrs} for the TGRS code $\mC_{\TGRS}(\bGa, \bv, k, \mT)$ with received word $\br$ and decoding radius
      \[
        \tau = \left\lfloor\frac{n - k}{2}\right\rfloor.
      \]

      If Algorithm~\ref{alg:list_dec_tgrs} outputs \fail{}, then output \fail{}. Otherwise, let $\mL$ be the list of polynomials returned by the algorithm.
      \label{step:unique_dec_tgrs__run_list_dec}

    \item If $\mL$ contains exactly one polynomial $f$, output $f$; otherwise, output \fail{}.
      \label{step:unique_dec_tgrs__output}
  \end{enumerate}
\end{alg}

The following theorem establishes the correctness of Algorithm~\ref{alg:unique_dec_tgrs}.

\begin{theorem}
  \label{thm:unique_dec_tgrs__correctness}
  Let $\mC_{\TGRS}(\bGa, \bv, k, \mT)$ be an MDS TGRS code with pseudo-dimension $k'$, and let $\br \in \Fq^n$ be a received word. 
  Assume that
  \[
    k' < \frac{(n + k)^2}{4n} + 1.
  \]

  Then Algorithm~\ref{alg:unique_dec_tgrs} outputs
  \begin{enumerate}
    \item the unique polynomial $f \in \mP_{\TGRS}(k,\mT)$ such that
      \[
        \rd\bigl(\evav(f), \br\bigr) \le \left\lfloor\frac{n-k}{2}\right\rfloor
      \]
      if such a polynomial exists, or
    \item \fail{} otherwise.
  \end{enumerate}

  In particular, for any fixed multiplicity $s$, the algorithm achieves the above optimal unique-decoding radius whenever
  \[
    k' < \frac{(n+k)^2}{4n(1+1/s)} + 1.
  \]
\end{theorem}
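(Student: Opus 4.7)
The plan is to reduce the theorem to two separate claims about the list returned by Algorithm~\ref{alg:list_dec_tgrs} in Step~\ref{step:unique_dec_tgrs__run_list_dec}: (i) the list decoder is applicable at radius $\tau = \lfloor (n-k)/2 \rfloor$, and (ii) the resulting list contains at most one polynomial. Granting both claims, Step~\ref{step:unique_dec_tgrs__output} then mechanically outputs the correct answer in both cases of the theorem.

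First, I would verify the applicability hypothesis of Theorem~\ref{thm:list_dec_tgrs__correctness}, namely $\tau < n - \sqrt{nk'}$. Since $\tau = \lfloor (n-k)/2 \rfloor \le (n-k)/2$, it suffices to show $\sqrt{nk'} < n - (n-k)/2 = (n+k)/2$. Squaring (both sides are positive), this is equivalent to $nk' < (n+k)^2/4$, i.e., $k' < (n+k)^2/(4n)$, which is exactly the assumed bound. Hence Theorem~\ref{thm:list_dec_tgrs__correctness} applies, and the list $\mL$ returned by Algorithm~\ref{alg:list_dec_tgrs} equals
\[
  \bigl\{ f \in \mP_{\TGRS}(k,\mT) : \rd\bigl(\evav(f),\br\bigr) \le \tau \bigr\}.
\]

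Next, I would argue that $|\mL| \le 1$, using only the MDS property. If $f_1, f_2 \in \mP_{\TGRS}(k,\mT)$ are distinct and both satisfy $\rd(\evav(f_i),\br) \le \tau$, then by the triangle inequality $\rd(\evav(f_1),\evav(f_2)) \le 2\tau \le n-k$. But since the code is MDS with minimum distance $n-k+1$, any two distinct codewords differ in at least $n-k+1$ positions, a contradiction. Hence the set $\mL$ has cardinality at most one. Combining with the previous step, if a polynomial $f \in \mP_{\TGRS}(k,\mT)$ with $\rd(\evav(f),\br)\le \lfloor (n-k)/2 \rfloor$ exists, it is unique and lies in $\mL$, so Algorithm~\ref{alg:unique_dec_tgrs} outputs $f$; otherwise $\mL = \emptyset$ and the algorithm outputs \fail{}. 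This establishes the two cases.

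For the fixed-multiplicity refinement, I would repeat the same argument but invoke the stronger guarantee of Theorem~\ref{thm:list_dec_tgrs__correctness} that the list decoder works at radius up to $n - \sqrt{nk'(1+1/s)}$. The requirement $\lfloor (n-k)/2 \rfloor \le n - \sqrt{nk'(1+1/s)}$ is (after the same algebraic rearrangement as above) equivalent to $k' \le (n+k)^2/\bigl(4n(1+1/s)\bigr)$, which matches the stated condition. There is no serious obstacle in the proof; the only thing to be mildly careful about is the floor in $\tau$, which only helps us (it makes $\tau$ smaller), so the displayed bounds on $k'$ suffice in all cases.
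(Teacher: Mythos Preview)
Your proposal is correct and follows essentially the same approach as the paper: verify that the assumption $k' < (n+k)^2/(4n)$ implies $\tau = \lfloor (n-k)/2\rfloor < n-\sqrt{nk'}$ so Theorem~\ref{thm:list_dec_tgrs__correctness} applies, then use the MDS minimum distance $n-k+1$ to conclude $|\mL|\le 1$, and finally repeat the algebra with the factor $(1+1/s)$ for the fixed-multiplicity claim. The only cosmetic difference is that you spell out the triangle-inequality argument for $|\mL|\le 1$, whereas the paper simply invokes the unique-decoding radius of an MDS code.
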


\begin{proof}
  We first verify that Step~\ref{step:unique_dec_tgrs__run_list_dec} never outputs \fail{}.
  By Theorem~\ref{thm:list_dec_tgrs__correctness}, Algorithm~\ref{alg:list_dec_tgrs} succeeds provided that
  \[
    \tau = \left\lfloor\frac{n-k}{2}\right\rfloor < n - \sqrt{n (k'-1)},
  \]
  which is implied by the assumption $k' < \frac{(n+k)^2}{4n} + 1$, and therefore
  Step~\ref{step:unique_dec_tgrs__run_list_dec} does not return \fail{}.

  By Theorem~\ref{thm:list_dec_tgrs__correctness}, the list $\mL$ produced in
  Step~\ref{step:unique_dec_tgrs__run_list_dec} consists precisely of all twist polynomials
  $f \in \mP_{\TGRS}(k,\mT)$ whose evaluations lie within Hamming distance
  $\lfloor\frac{n-k}{2}\rfloor$ of $\br$. Since $\mC_{\TGRS}(\bGa,\bv,k,\mT)$ is an $[n,k]$ MDS code,
  its minimum distance equals $d = n-k+1$, and thus the unique-decoding radius is
  $\left\lfloor\frac{d-1}{2}\right\rfloor = \left\lfloor\frac{n-k}{2}\right\rfloor$.
  Consequently, any Hamming ball of radius $\left\lfloor\frac{n-k}{2}\right\rfloor$ contains at most one codeword of $\mC_{\TGRS}$, and hence $\lvert \mL \rvert \le 1$.

  Finally, Algorithm~\ref{alg:unique_dec_tgrs} outputs the unique polynomial $f$ when $\lvert \mL \rvert = 1$,
  and outputs \fail{} when $\lvert \mL \rvert = 0$, which establishes the claim.

  For fixed $s$, Theorem~\ref{thm:list_dec_tgrs__correctness} guarantees successful list decoding whenever
  \[
    \tau = \left\lfloor\frac{n-k}{2}\right\rfloor < n - \lfloor \sqrt{n(k'-1)(1+1/s)}\rfloor.
  \]

  This inequality holds under the condition $k' < \frac{(n+k)^2}{4n(1+1/s)} + 1$, completing the proof.

\end{proof}

The running time of Algorithm~\ref{alg:unique_dec_tgrs} is the same as that of Algorithm~\ref{alg:list_dec_tgrs}.

\begin{remark}
  We estimate the maximal number of twists that can be supported while still
  satisfying the applicability condition of
  Algorithm~\ref{alg:unique_dec_tgrs}. Throughout, we consider a family of TGRS
  codes of fixed rate $R = k/n$.

  By Theorem~\ref{thm:unique_dec_tgrs__correctness}, 
  Algorithm~\ref{alg:unique_dec_tgrs} applies whenever
  \[
    k' < \frac{(n+k)^2}{4n} + 1
    \qquad\Longleftrightarrow\qquad
    k' - k < \frac{(n-k)^2}{4n} + 1.
  \]
  To maximize the admissible number of twists, we may therefore choose
  \[
    k' - k = \left\lceil \frac{(n-k)^2}{4n} \right\rceil.
  \]
  For fixed rate $R = k/n$, substituting $k = Rn$ yields
  \[
    k' - k
      = \left\lceil \frac{(1-R)^2}{4}\,n \right\rceil.
  \]

  Since the pairs $(t_i,h_i)$ are required to be pairwise distinct, with
  $t_i$ taking at most $k'-k$ possible values and $h_i$ taking $k$ possible
  values, we obtain the combinatorial bound
  \[
    \ell \le k\,(k'-k)
      = Rn\left\lceil \frac{(1-R)^2}{4}\,n \right\rceil
      = O(n^2).
  \]
  Thus, for fixed-rate TGRS codes, our decoder can accommodate a number of
  twists growing quadratically in~$n$, which is asymptotically optimal, in
  contrast to existing decoding algorithms that are restricted to the regime
  $\ell = 1$.
\end{remark}

\begin{remark}
  By Theorem~\ref{thm:list_dec_tgrs__complexity}, for fixed-rate MDS
  TGRS codes and fixed constant multiplicity $s$, our unique decoding algorithm for TGRS codes runs in time
  \[
    O\bigl(n \log^2 n \log\log n + \ell\bigr).
  \]

  In particular, when the number of twists satisfies
  $\ell = O\!\bigl(n \log^2 n \log\log n\bigr)$, the running time becomes
  \[
    O\bigl(n \log^2 n \log\log n\bigr).
  \]

  This improves asymptotically over the previously best-known $O(n^2)$ decoder
  of~\cite{wang2025improved}, while allowing the number of twists $\ell$ to grow
  beyond the constant-twist regime.
\end{remark}

\section{Efficient Decoding of TGRS codes with AMD codes}

\label{sec:decoding_tgrs_amd}

In the previous section, we proposed a list-decoding algorithm for TGRS codes. In many applications, however, one ultimately needs a single decoded message rather than a list of candidates. Guruswami and Smith~\cite{guruswami2010codes} introduced an AMD-code-based method to reliably select the intended polynomial from the list returned by a list decoder for GRS codes. Their original treatment focuses on the binary case; here we extend the approach to the $q$-ary setting. In addition, we provide an explicit complexity analysis for the resulting AMD-assisted encoding and decoding procedures for TGRS codes.

\subsection{Algorithm}

The core idea is as follows. In the encoding phase, we first map the message to an AMD-augmented information vector and then encode this augmented vector using a TGRS code. In the decoding phase, we apply the TGRS list decoding algorithm to the received word to obtain a list of candidate AMD codewords, and then invoke the AMD verification property to identify the unique valid AMD-augmented information vector. Finally, we recover the original message from this verified vector.

We employ the systematic AMD code from Lemma~\ref{lem:amd_construction}. Two issues remain.

The first issue is that this AMD code has a fixed error parameter $\varepsilon$ once the message length $k$ and field size $q$ are fixed. If the decoding list is large, the probability that none of the invalid AMD codewords passes verification may become unacceptably small. To mitigate this, we amplify the AMD security by working over an extension field and aggregating message symbols into larger blocks.

In the encoding phase, let $\m$ be a message of length $k$ over $\Fq$. Partition $\m$ into $\lceil k/b \rceil$ blocks of length $b$. If the last block has fewer than $b$ symbols, pad it with zeros. Each block in $\Fq^b$ is then identified with a single symbol of the extension field $\Fqb$, yielding a compressed message $\mpr \in \Fqb^{\lceil k/b \rceil}$.

We apply the AMD construction of Lemma~\ref{lem:amd_construction} over $\Fqb$ to obtain a random seed $x' \in \Fqb$ and a tag $t' = g(x', \mpr) \in \Fqb$.

Next, we expand $x'$ and $t'$ back to vectors $x,t \in \Fqbv$. Finally, we form the AMD-augmented information vector
\[
  \bg = (\m \,\|\, x \,\|\, t) \in \Fq^{k + 2b}
\]
by concatenating the original message $\m$ with $x$ and $t$, and then encode $\bg$ using the TGRS code.

In the decoding phase, the AMD verification is performed by first extracting $(\hat{\m}, \hat{x}, \hat{t})$ from each candidate polynomial, recompressing $\hat{\m}$ and $\hat{x}$ into $\hat{\mpr} \in \Fqb^{\lceil k/b \rceil}$ and $\hat{x}' \in \Fqb$, recomputing the tag $\hat{t}'_{0} = g(\hat{x}', \hat{\mpr})$, and finally checking whether $\hat{t}'_0$ coincides with the recompressed version of $\hat{t}$. Only candidates that pass this check are retained. This modification preserves the AMD security guarantees while effectively reducing the error probability per candidate.

The second issue is that Lemma~\ref{lem:amd_construction} requires that the characteristic $p$ of $\Fq$ does not divide $k+2$. This can be enforced by a minor change in the message length: whenever $p \mid k+2$, we append one dummy zero symbol to the message, thereby increasing $k$ by $1$ and ensuring $p \nmid (k+1)+2$. This does not affect the information content of the message but restores the applicability of the AMD construction.

We fix an $\F_q$-basis of $\F_{q^b}$ once and for all; throughout, \emph{pack/unpack} refers to identification between $\F_q^b$ and $\F_{q^b}$ under this basis. We summarize the computation of the AMD tag in the following algorithm.

\begin{alg}
  \label{alg:amd_tag}

  Computation of the AMD tag using the systematic AMD code of Lemma~\ref{lem:amd_construction}.

  Input: A message $\m = (m_0, m_1, \ldots, m_{k-1}) \in \Fqk$, a block size $b$, and a seed $x' \in \Fqb$.

  Output: An AMD tag $t' \in \Fqb$.

  \begin{enumerate}
    \item Partition $\m$ into $r_0=\lceil k/b\rceil$ consecutive blocks of length $b$ (padding the last block with zeros if needed), and pack them into
    \[
      \mpr=(m'_0,\ldots,m'_{r_0-1})\in\F_{q^b}^{r_0}.
    \]

    \item Let $p=\mathrm{char}(\F_q)$. If $p\mid(r_0+2)$, append one zero symbol to $\mpr$ and set $r=r_0+1$; otherwise set $r=r_0$.
    \label{step:amd_tag__append_zero_to_m}
    
    \item Compute and output the AMD tag
      \[
        t' = g(x', \mpr) = (x')^{r+2} + \sum_{i=1}^{r} m'_{i-1} (x')^i \in \Fqb.
      \]
    \label{step:amd_tag__tag_computation}
  \end{enumerate}
\end{alg}

We then present the encoding and decoding algorithms below.

\begin{alg}
  \label{alg:encoding_tgrs_amd}

  Encoding algorithm for the TGRS code $\mC_{\TGRS}(\bGa, \bv, k+2b, \mT)$ using AMD preencoding with block size $b$.

  Input: A message $\m = (m_0, m_1, \ldots, m_{k-1}) \in \Fqk$.

  Output: A codeword $\bc \in \mC_{\TGRS}(\bGa, \bv, k+2b, \mT)$.

  \begin{enumerate}
    \item Sample a seed $x' \in \Fqb$ uniformly at random.
    \label{step:encoding_tgrs_amd__sample_seed}
    \item Run Algorithm~\ref{alg:amd_tag} on input $(\m, b, x')$ to obtain the AMD tag $t' \in \Fqb$.
    \item Unpack $x'$ and $t'$ into $x, t \in \Fqbv$, then form the AMD-augmented information vector
      \[
        \bg = (\m \,\|\, x \,\|\, t) \in \Fq^{k+2b}.
      \]
    \label{step:encoding_tgrs_amd__generate_g}
    \item Encode $\bg$ with the TGRS encoder of $\mC_{\TGRS}(\bGa, \bv, k+2b, \mT)$ and output the resulting codeword $\bc$.
  \end{enumerate}
\end{alg}

\begin{alg}
  \label{alg:decoding_tgrs_amd}

  Decoding algorithm for the TGRS code $\mC_{\TGRS}(\bGa, \bv, k+2b, \mT)$ using AMD preencoding with block size~$b$.

  Input: A received word $\br \in \Fq^n$ and a decoding radius $\tau$.

  Output: A message $\m \in \Fq^k$ or \fail{}.

  \begin{enumerate}
    \item Run Algorithm~\ref{alg:list_dec_tgrs} for the TGRS code $\mC_{\TGRS}(\bGa, \bv, k+2b, \mT)$ on input $(\br, \tau)$. If it outputs \fail{}, output \fail{}. Otherwise, let $\mL_1$ be the returned list.
    \label{step:decoding_tgrs_amd__run_list_dec}
    \item Initialize $\mL$ as an empty list.
    \item For each candidate polynomial in $\mL_1$, do:
    \label{step:decoding_tgrs_amd__amd_check}
      \begin{enumerate}
        \item Extract its first $k+2b$ coefficients, which correspond to the AMD-preencoded message, and parse them as
          \[
            (\hat{\m}, \hat{x}, \hat{t}) \in \Fq^k \times \Fq^b \times \Fq^b.
          \]
        \item Pack $\hat{x}$ and $\hat{t}$ into elements $\hat{x}'$ and $\hat{t}'$ in $\F_{q^b}$. Run Algorithm~\ref{alg:amd_tag} on input $(\hat{\m}, b, \hat{x}')$ to obtain the AMD tag $\hat{t}'_0 \in \Fqb$. If $\hat{t}'_0 = \hat{t}'$, append $\hat{\m}$ to $\mL$.
      \end{enumerate}
    \item If $\mL$ contains exactly one message vector, output this vector as $\m$. Otherwise, output \fail{}.
  \end{enumerate}
\end{alg}

\subsection{Error probability}

Since the underlying AMD code used in Algorithm~\ref{alg:decoding_tgrs_amd} has a non-zero error probability, the overall decoding procedure may also fail. However, by choosing the block length~$b$ sufficiently large, this error probability can be made negligibly small.

A subtlety is that the AMD soundness bound in Definition~\ref{def:amd} is stated for a fixed manipulation $\delta$. In our setting, the manipulation arises from the channel error affecting a TGRS codeword. We therefore need to argue that, once the error vector on the TGRS codeword is fixed, the induced manipulation on the AMD-augmented information vector is also fixed, so that the AMD bound applies.

The following theorem bounds the decoding radius and error probability of Algorithm~\ref{alg:decoding_tgrs_amd}.

\begin{theorem}
  \label{thm:decoding_tgrs_amd__correctness}
  Let $\mC_{\TGRS}(\bGa, \bv, k+2b, \mT)$ be a TGRS code with pseudo-dimension $k'$. Suppose a message $\m$ is encoded by Algorithm~\ref{alg:encoding_tgrs_amd} into a codeword $\bc \in \mC_{\TGRS}(\bGa, \bv, k+2b, \mT)$. Let $\br$ be a received word and let $\tau$ be a decoding radius such that
  \[
    \tau < n - \sqrt{n (k'-1)}.
  \]

  If $\rd(\br,\bc) \le \tau$, then Algorithm~\ref{alg:decoding_tgrs_amd} outputs $\m$ with probability at least
  \[
    1 - \frac{\lceil k/b\rceil + 2}{q^b}\left(\sqrt{\frac{ns(s+1)}{(k'-1)}} - 1\right),
  \]
  where $s$ is the multiplicity in Guruswami--Sudan algorithm.

  In particular, for fixed constant multiplicity $s$, the algorithm succeeds whenever
  \[
    \tau < n - \lfloor\sqrt{n(k'-1)(1+1/s)}\rfloor.
  \]
\end{theorem}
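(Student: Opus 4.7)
The plan is to show that the correct message $\m$ always survives the AMD check, and then to bound by a union bound the probability that any spurious candidate also survives.

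First, I would apply Theorem~\ref{thm:list_dec_tgrs__correctness} to $\mC_{\TGRS}(\bGa,\bv,k+2b,\mT)$: since $\tau<n-\sqrt{nk'}$, the list $\mL_1$ returned in Step~\ref{step:decoding_tgrs_amd__run_list_dec} coincides with $\{f\in\mP_{\TGRS}(k+2b,\mT):\rd(\evav(f),\br)\le\tau\}$. Let $f$ be the polynomial whose evaluation equals $\bc$. Since $\rd(\bc,\br)\le\tau$, we have $f\in\mL_1$, and by construction of $\bg=(\m\|x\|t)$ the tag $t$ equals the AMD tag produced from $(\m,x)$ by Algorithm~\ref{alg:amd_tag}. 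Therefore, the AMD verification step accepts the candidate $\hat\m=\m$, so $\m\in\mL$ deterministically.

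Next, I would parameterize the bad event that some $\bg'\ne\bg$ extracted from $\mL_1$ also passes the AMD check by the \emph{manipulation} $\delta=\bg'-\bg\in\Fq^{k+2b}\setminus\{0\}$. Writing $\be=\br-\bc$ and using the $\Fq$-linearity of the TGRS encoder $\enc$, the condition $\rd(\evav(f'),\br)\le\tau$ is equivalent to $\mathrm{wt}(\enc(\delta)-\be)\le\tau$. The crucial observation is that the set
\[
  \Delta=\bigl\{\delta\in\Fq^{k+2b}\setminus\{0\}:\mathrm{wt}(\enc(\delta)-\be)\le\tau\bigr\}
\]
is determined by $\be$ alone and does not depend on the AMD seed $x$. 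By Proposition~\ref{prop:list_size_tgrs} we have $|\mL_1|\le s\sqrt{n/k'}$, hence $|\Delta|\le s\sqrt{n/k'}-1$. For each fixed $\delta=(\delta_\m,\delta_x,\delta_t)\in\Delta$, the candidate $\bg+\delta$ passes the AMD check iff the tag derived from $(\m+\delta_\m,\,x+\delta_x)$ equals $t+\delta_t$. Under the fixed $\Fq$-basis identifying $\Fqbv$ with $\Fqb$, this is precisely the event that a nonzero algebraic manipulation of the systematic AMD codeword of Lemma~\ref{lem:amd_construction} escapes detection. Because $x\in\Fqb$ is uniform and the effective AMD message length is at most $\lceil k/b\rceil+1$ after the possible zero padding in Step~\ref{step:amd_tag__append_zero_to_m} of Algorithm~\ref{alg:amd_tag}, Definition~\ref{def:amd} bounds this probability by $\varepsilon=(\lceil k/b\rceil+2)/q^b$. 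Summing over $\Delta$ via a union bound gives the stated success probability $1-\varepsilon(s\sqrt{n/k'}-1)$, and the final clause for fixed $s$ follows directly from the multiplicity-dependent form of Theorem~\ref{thm:guruswami_sudan}.

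The main obstacle lies in this decoupling argument: one must justify that each nonzero manipulation $\delta$ may be treated as independent of the AMD seed $x$ when invoking the AMD soundness bound. This rests entirely on the $\Fq$-linearity of the TGRS encoding map, which guarantees that $\Delta$ depends only on $\be$ and not on $x$; only under this decoupling can the one-sided AMD guarantee, which is stated per fixed nonzero $\delta$, be summed per element of $\Delta$. A small bookkeeping point is to track the possible increment in the effective AMD message length from $\lceil k/b\rceil$ to $\lceil k/b\rceil+1$ caused by the divisibility fix $p\nmid(r+2)$, which accounts for the ``$+2$'' in the numerator of $\varepsilon$.
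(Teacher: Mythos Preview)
Your proposal is correct and follows essentially the same approach as the paper's proof: show the true message always survives, use linearity of the TGRS encoder to argue that the set of nonzero manipulations $\delta$ induced by the other list elements is determined by the error vector $\be$ alone (hence independent of the AMD seed), then apply the AMD soundness bound of Lemma~\ref{lem:amd_construction} per fixed $\delta$ and conclude by a union bound using the list-size estimate of Proposition~\ref{prop:list_size_tgrs}. Your explicit identification of the decoupling argument as the crux, and your tracking of the ``$+2$'' from the divisibility fix in Step~\ref{step:amd_tag__append_zero_to_m}, match the paper's reasoning exactly.
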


\begin{proof}
  First observe that $\m$ is always contained in the final list $\mL$. Let $\bg$ be the AMD-augmented information vector produced in Step~\ref{step:encoding_tgrs_amd__generate_g} of Algorithm~\ref{alg:encoding_tgrs_amd}. By Theorem~\ref{thm:list_dec_tgrs__correctness}, since $\rd(\br,\bc)\le\tau$ and $\tau < n - \sqrt{n(k'-1)}$ (or $\tau < n - \lfloor\sqrt{n(k'-1)(1+1/s)}\rfloor$ for fixed $s$), the polynomial corresponding to $\bg$ must appear in $\mL_1$. Moreover, $\bg$ is a valid AMD-augmented information vector by construction, so it necessarily passes the AMD verification in Algorithm~\ref{alg:decoding_tgrs_amd}. Hence $\m$ is added to~$\mL$.

  It remains to bound the probability that any other element of $\mL_1$ passes the AMD check. Write $\bg_1 = \bg,\;\bg_2,\ldots,\bg_L$ for the AMD-augmented information vectors corresponding to the polynomials in $\mL_1$, and let $\bc_1 = \bc,\;\bc_2,\ldots,\bc_L$ be the associated TGRS codewords. Let $\be = \br - \bc$ be the channel error vector. By linearity of $\mC_{\TGRS}$, the set of codewords within distance~$\tau$ of~$\be$ is exactly
  \[
    \bc'_1 = \bc_1 - \bc = 0,\;\bc'_2 = \bc_2 - \bc,\ldots,\bc'_L = \bc_L - \bc.
  \]

  The corresponding information vectors are
  \[
    \delta_1 = \bg_1 - \bg = 0,\;\delta_2 = \bg_2 - \bg,\ldots,\delta_L = \bg_L - \bg.
  \]

  Each $\delta_j$ is completely determined by $\be$ and the code and is, in particular, independent of the randomness used in the AMD encoding (the choice of $x'$). Thus, from the AMD-code viewpoint, each $\delta_j$ is a fixed manipulation applied to~$\bg$.

  By Lemma~\ref{lem:amd_construction} (applied over~$\Fqb$), for any non-zero manipulation $\delta_j$ the probability that the manipulated AMD-augmented information vector passes verification is at most $(r+1)/q^b$. By Algorithm~\ref{alg:amd_tag}, we have $r \le \left\lceil k/b \right\rceil + 1$, and therefore
  \[
  \begin{split}
    & \Pr[\text{a fixed manipulated vector passes the AMD check}] \\
    \le{} & \frac{\lceil k/b\rceil + 2}{q^b}.
  \end{split}
  \]

  There are $L-1$ manipulated candidates $\bg_2,\ldots,\bg_L$. By a union bound, the probability that at least one of them passes the AMD check is at most $\frac{\lceil k/b\rceil + 2}{q^b}\,(L-1)$.
  By Proposition~\ref{prop:list_size_tgrs}, $L \le \sqrt{ns(s+1)/(k'-1)}$, and hence, with probability at least
  \[
    1 - \frac{\lceil k/b\rceil + 2}{q^b}\left(\sqrt{\frac{ns(s+1)}{(k'-1)}} - 1\right),
  \]
  the only element in $\mL$ is~$\m$, so Algorithm~\ref{alg:decoding_tgrs_amd} outputs~$\m$.
\end{proof}

\begin{remark}
  We now give a convenient upper bound on the error term in
  Theorem~\ref{thm:decoding_tgrs_amd__correctness}. We have
  \[
  \begin{split}
    & \frac{\lceil k/b\rceil + 2}{q^b}\Bigl(\sqrt{\frac{ns(s+1)}{(k'-1)}} - 1\Bigr) \\
    \le{} & \frac{k/b + 3}{q^b}\Bigl((s+1)\sqrt{\frac{n}{(k-1)}} - 1\Bigr) \\
    \le{} & \frac{k/b + 3}{q^b}\,(s+1)\sqrt{\frac{n}{(k-1)}}.
  \end{split}
  \]

  Using $k \le n \le q$ gives
  \[
  \begin{split}
    & \frac{\lceil k/b\rceil + 2}{q^b}\Bigl(\sqrt{\frac{ns(s+1)}{(k'-1)}} - 1\Bigr) \\
    \le{} & \frac{(s+1) \sqrt{kn(1+\frac{1}{k-1})}/b}{q^{b}} + \frac{3 (s+1) \sqrt{\frac{n}{k-1}}}{q^{b}} \\
    \le{} & \frac{\sqrt{2}(s+1)}{b\,q^{b-1}} + \frac{3(s+1)}{q^{b-1}} \\
    \le{} & \frac{5(s+1)}{q^{b-1}},
  \end{split}
  \]
  since $b \ge 1$, $k \ge 2$ and $1/(k-1) \le 1$. Thus, the error probability decreases in exponent of $1/q$ when $b$ increases. In particular, we can achieve error probability $5(s+1)/q$ (which is linear in $1/q$ for fixed constant multiplicity $s$) by taking $b = 2$.
\end{remark}

\subsection{Complexity Analysis}

We now analyze the time complexity of the algorithms. We begin with the cost of computing the AMD tag.

\begin{lemma}
  \label{lem:amd_complexity}
  Let $k$ be the message length and $b$ the block size. Counting operations over $\Fq$ as basic operations, Algorithm~\ref{alg:amd_tag} runs in time
  \[
    O\bigl(k\log b\bigr).
  \]
\end{lemma}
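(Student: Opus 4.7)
The plan is to walk through the three steps of Algorithm~\ref{alg:amd_tag} and account for their costs in $\Fq$-operations, observing that the dominant work is a single polynomial evaluation in the extension field $\Fqb$.

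First I would dispatch Steps~1 and~2 (partitioning $\m$ into $r_0=\lceil k/b\rceil$ blocks of length~$b$, packing each block as an element of $\Fqb$ under the fixed $\Fq$-basis, and possibly appending one zero symbol so that $r\le \lceil k/b\rceil + 1$). Under the fixed basis identification between $\Fq^b$ and $\Fqb$, packing and unpacking are merely relabelings of coordinates, so together these two steps consume at most $O(k)$ basic $\Fq$-operations and produce the vector $\mpr$ with $r = O(k/b)$ entries in $\Fqb$.

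Next, for Step~\ref{step:amd_tag__tag_computation} I would recognize that the AMD tag is the value at $y=x'$ of the polynomial
\[
  P(y) \;=\; y^{r+2} + \sum_{i=1}^{r} m'_{i-1}\,y^{i} \;\in\; \Fqb[y],
\]
which has degree $r+2 = O(k/b)$. Applying Horner's rule evaluates $P(x')$ using at most $r+2$ multiplications and $r+2$ additions in $\Fqb$. Addition in $\Fqb$ costs $O(b)$ $\Fq$-operations (coordinate-wise under the fixed basis), while a multiplication in $\Fqb$ reduces to multiplying two polynomials over $\Fq$ of degree less than $b$ and reducing modulo the defining polynomial of $\Fqb$, which can be carried out in $O(b\log b)$ $\Fq$-operations using fast polynomial multiplication. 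The total cost of Step~\ref{step:amd_tag__tag_computation} is therefore
\[
  O\!\left(\frac{k}{b}\cdot b\log b\right) + O\!\left(\frac{k}{b}\cdot b\right) \;=\; O(k\log b),
\]
and this dominates the $O(k)$ cost accumulated by Steps~1 and~2.

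The only mildly subtle point is invoking the correct bound on extension-field multiplication: once one commits to a standard fast-arithmetic model in which a single product in $\Fqb$ takes $O(b\log b)$ $\Fq$-operations, the remainder of the argument is direct bookkeeping and the claim $O(k\log b)$ follows immediately.
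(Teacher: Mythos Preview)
Your proposal is correct and takes essentially the same approach as the paper: both identify Step~\ref{step:amd_tag__tag_computation} as the dominant cost, count $O(r)=O(k/b)$ multiplications in $\Fqb$ via Horner-style evaluation, and charge $O(b\log b)$ $\Fq$-operations per extension-field multiplication to arrive at $O(k\log b)$. Your version is slightly more explicit in accounting for Steps~1--2 and the addition cost, but the argument is the same.
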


\begin{proof}
  The dominant cost arises from evaluating the polynomial
  \[
    g(x', \m') = (x')^{r+2} + \sum_{i=1}^{r} m'_{i-1}(x')^i \in \Fqb,
  \]
  where $r = O(k/b)$. This requires $O(r)$ additions and $O(r)$ multiplications in $\Fqb$. Using a fast multiplication algorithm over $\Fqb$, one multiplication can be performed in $O(b\log b)$ operations over $\Fq$. Hence the time complexity is
  \[
    O\bigl(r\cdot b\log b\bigr)
      = O\!\left(\frac{k}{b}\cdot b\log b\right)
      = O(k\log b).
  \]
\end{proof}

A standard encoder for TGRS codes based on fast multipoint polynomial evaluation runs in time $O\bigl(n\log^2 n\bigr)$\cite{borodin1974fast}. Since the AMD preencoding only introduces an additional preprocessing step whose cost is asymptotically dominated by the evaluation phase, it does not affect the overall complexity. This yields the following proposition.

\begin{proposition}
  Let $n$ be the code length. Counting operations over $\Fq$ as basic operations, Algorithm~\ref{alg:encoding_tgrs_amd} runs in time
  \[
    O\bigl(n\log^2 n\bigr).
  \]
\end{proposition}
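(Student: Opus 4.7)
The plan is to decompose Algorithm~\ref{alg:encoding_tgrs_amd} into its four constituent steps, bound each in isolation, and verify that the TGRS encoding step in the final line dominates. First I would dispose of the three lightweight steps: sampling a uniformly random $x' \in \Fqb$ in Step~\ref{step:encoding_tgrs_amd__sample_seed} costs $O(b)$ operations over $\Fq$; Lemma~\ref{lem:amd_complexity} directly gives $O(k \log b)$ for the AMD-tag computation via Algorithm~\ref{alg:amd_tag}; and unpacking $x', t'$ into $\Fqbv$ and concatenating with $\m$ to produce $\bg \in \Fq^{k+2b}$ in Step~\ref{step:encoding_tgrs_amd__generate_g} is a linear scan of cost $O(k + b)$.

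The substantive step is the final TGRS encoding of $\bg$. I would argue it as follows. Building the twisted polynomial
\[
  f(x) = \sum_{i=0}^{k+2b-1} g_i x^i + \sum_{j=1}^{\ell} \eta_j g_{h_j}\, x^{k+2b-1+t_j}
\]
from the coefficient vector $\bg$ requires $O(k + 2b + \ell)$ operations, since each of the $k+2b$ base coefficients and each of the $\ell$ twist terms contributes $O(1)$ work. By Lemma~\ref{lem:tgrs_subcode_grs}, $f$ is an ordinary polynomial in $\Fqx$ of degree less than the pseudo-dimension $k' \le n$, so evaluating $v_i f(\Ga_i)$ at the $n$ points $\Ga_1,\dots,\Ga_n$ reduces to a standard multipoint evaluation followed by $n$ scalar multiplications. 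Invoking the fast multipoint evaluation algorithm of Borodin and Moenck~\cite{borodin1974fast} gives this final step a cost of $O(n \log^2 n)$.

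Summing the contributions yields a total running time of $O(b + k \log b + k + \ell + n \log^2 n)$, which collapses to $O(n \log^2 n)$ in the parameter regime underlying the paper, namely $b, k \le n$ and $\ell = O(n \log^2 n)$. The only conceptual point to check, and the one I expect to require the most care, is that fast multipoint evaluation is applicable despite the twist structure; this is immediate from Lemma~\ref{lem:tgrs_subcode_grs} because the twisted polynomial is ultimately a single element of $\Fqx_{<k'}$, so no bespoke evaluation procedure is required and the AMD preencoding introduces only lower-order additive overhead.
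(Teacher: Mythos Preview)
Your proposal is correct and follows the same approach as the paper: the paper's justification is simply the one-sentence remark preceding the proposition, namely that a standard TGRS encoder via fast multipoint evaluation~\cite{borodin1974fast} runs in $O(n\log^2 n)$ and the AMD preencoding is lower-order overhead. Your step-by-step accounting is a faithful and more explicit expansion of that argument; in particular, you are more careful than the paper in tracking the $O(\ell)$ cost of assembling the twist terms, which the paper silently absorbs.
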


We finally analyze the complexity of our decoding algorithm.

\begin{theorem}
  Let $\mC_{\TGRS}(\bGa,\bv,k,\mT)$ be a TGRS code of length $n$ over $\Fq$ with pseudo-dimension $k'$. Counting operations over $\Fq$ as basic operations, Algorithm~\ref{alg:decoding_tgrs_amd} runs in time
  \[
    O\!\left(\Bigl(s\frac{n}{k'}\Bigr)^{O(1)}\bigl(n\log^2 n\log\log n + \ell\bigr)\right),
  \]
  where $s$ is the multiplicity in Guruswami--Sudan algorithm.
\end{theorem}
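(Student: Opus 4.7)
The plan is to decompose the running time of Algorithm~\ref{alg:decoding_tgrs_amd} into its two dominant phases: (i) the TGRS list decoder invocation in Step~\ref{step:decoding_tgrs_amd__run_list_dec}, and (ii) the AMD filtering loop in Step~\ref{step:decoding_tgrs_amd__amd_check}. The overall running time equals the sum of the two, and I will argue that the filtering cost is absorbed into the list-decoding cost, yielding the claimed bound.

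For phase (i), I would directly invoke Theorem~\ref{thm:list_dec_tgrs__complexity}, applied to the ambient TGRS code $\mC_{\TGRS}(\bGa,\bv,k+2b,\mT)$ of dimension $k+2b$ and pseudo-dimension $k'$. This yields running time $O\!\left((sn/k')^{O(1)}(n\log^2 n\log\log n+\ell)\right)$, which matches the target bound up to the additive filtering cost. Since $b$ is a small constant in the application (e.g.\ $b=2$ for the $O(1/q)$ error regime), the shift from $k$ to $k+2b$ does not affect the asymptotics.

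For phase (ii), I would bound the number of iterations by $|\mL_1|\le s\sqrt{n/k'}$ via Proposition~\ref{prop:list_size_tgrs}. For each candidate polynomial, the work consists of (a) reading the first $k+2b$ coefficients, (b) unpacking/packing between $\F_q^b$ and $\F_{q^b}$ under the fixed basis, which is linear in $b$ per block and thus $O(k+b)$ in total, and (c) running Algorithm~\ref{alg:amd_tag} to recompute the AMD tag $\hat{t}'_0$ and comparing it with $\hat{t}'$. By Lemma~\ref{lem:amd_complexity}, step~(c) costs $O(k\log b)$ operations over $\F_q$. The total cost of phase (ii) is therefore $O\!\bigl(s\sqrt{n/k'}\cdot k\log b\bigr)$. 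Using $k\le n$ and $\sqrt{n/k'}\le n/k'$, this is $O\!\bigl((sn/k')\cdot n\log b\bigr)$, which is dominated by the $(sn/k')^{O(1)}\cdot n\log^2 n\log\log n$ term from phase (i).

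There is no substantive obstacle; the proof is essentially an accounting exercise. The main subtlety to keep track of is that operations are counted over $\F_q$ rather than over $\F_{q^b}$, so the cost of the AMD tag recomputation must be scaled via fast multiplication in $\F_{q^b}$, which is precisely what Lemma~\ref{lem:amd_complexity} already records. Summing the two phases and collapsing the subdominant filtering term gives the stated complexity.
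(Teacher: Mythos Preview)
Your proposal is correct and follows essentially the same approach as the paper's proof: both decompose the cost into the list-decoding call (handled by Theorem~\ref{thm:list_dec_tgrs__complexity}) and the AMD verification loop (bounded via Proposition~\ref{prop:list_size_tgrs} and Lemma~\ref{lem:amd_complexity}), then argue the latter is absorbed into the former. The only cosmetic difference is in how the absorption is justified---you use $\sqrt{n/k'}\le n/k'$, while the paper uses $b\le k'\le n$ to control $\log b$---and your remark that $b$ is a small constant is unnecessary (the bound goes through for any $b\le n$), but neither point affects correctness.
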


\begin{proof}
  The running time has two contributions: the underlying TGRS list decoding and the AMD verification.

  By Theorem~\ref{thm:list_dec_tgrs__complexity}, the list decoding part costs
  \[
    O\!\left(\Bigl(s\frac{n}{k'}\Bigr)^{O(1)}\bigl(n\log^2 n\log\log n + \ell\bigr)\right).
  \]

  For the AMD verification, each candidate in $\mL_1$ is checked in
  $O(k\log b)$ time by Lemma~\ref{lem:amd_complexity}. Since
  $\lvert \mL_1 \rvert \le \sqrt{ns(s+1)/(k'-1)} = O(s\sqrt{n/k'})$ by Proposition~\ref{prop:list_size_tgrs}, the total AMD cost is $O\!\left(s\sqrt{n/k'}\,k\log b\right)$. Using $k \le n$ and $b \le k' \le n$, this term is absorbed into $O\!\left((s\frac{n}{k'})^{O(1)}n\log^2 n\log\log n\right)$, and hence does not affect the overall asymptotic complexity.

  Therefore, Algorithm~\ref{alg:decoding_tgrs_amd} runs in time
  \[
    O\!\left(\Bigl(s\frac{n}{k'}\Bigr)^{O(1)}\bigl(n\log^2 n\log\log n + \ell\bigr)\right),
  \]
  as claimed.
\end{proof}

To better understand the procedure of the AMD-assisted decoding algorithm, we present a numerical example below that further demonstrates its correctness. As observed in~\cite{mceliece2003guruswami}, the output list is typically of size one in most instances. Here we instead construct a received word whose Hamming ball of the prescribed radius contains two codewords, and show how the AMD mechanism is used to disambiguate the candidates and recover the intended message.

\begin{example}
  \label{exm:tgrs_with_amd}
  Consider the TGRS code $\mC_{\TGRS}(\{0,1,\ldots,22\}, \boldsymbol{1}, 5, \{(1,1,1)\})$ over $\F_{23}$. 
  Suppose the message $(6, 8, 18)$ is encoded into the codeword
  \[
  \begin{split}
    (& 6, 5, 12, 6, 10, 16, 2, 18, 19, 20, 21, \\
     &1, 4, 18, 15, 14, 6, 17, 10, 17, 18, 4, 17).
  \end{split}
  \]

  Assume the transmission is affected by the error vector of Hamming weight $\lceil n - \sqrt{n(k'-1)}\rceil - 1 = 12$:
  \[
  \begin{split}
    (&17, 8, 0, 10, 0, 8, 0, 0, 0, 5, 0, 0, \\
     &0, 0, 4, 11, 11, 0, 16, 22, 6, 0, 14).
  \end{split}
  \]

  Running the list decoding algorithm on the received word yields two candidate
  AMD-augmented messages:
  \[
    (0, 4, 20, 6, 2), \qquad (6, 8, 18, 1, 10).
  \]

  The first candidate fails the AMD verification, while the second one passes and
  corresponds to the originally transmitted message.
\end{example}

\section{Efficient Decoding of Roth--Lempel Codes}

\label{sec:decoding_rl_codes}

In this section, we present a list decoding algorithm for Roth--Lempel codes and analyze its decoding radius, list size and time complexity. We then describe a corresponding unique decoding algorithm, followed by an AMD-assisted list decoding algorithm.

Firstly, we have the following observation which is important for our decoding.

\begin{lemma}
  \label{lem:rl_punctured_grs}
  Let $\mC_{\RL}(\bGa, \bv, k, \delta)$ be a Roth--Lempel code as in Definition~\ref{def:rl}, and let $\bv'$ be the vector obtained from $\bv$ by deleting its last coordinate. Then
  \[
    \mC_{\GRS}(\bGa, \bv', k)
  \]
  is exactly the punctured code of $\mC_{\RL}(\bGa, \bv, k, \delta)$ obtained by removing the last coordinate.
\end{lemma}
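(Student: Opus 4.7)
The plan is to unpack the generator matrix of the Roth--Lempel code and observe that, after puncturing the last coordinate, the structure collapses to that of a standard GRS code. Concretely, for a message $\m = (m_0, m_1, \ldots, m_{k-1}) \in \Fqk$, I would write the corresponding Roth--Lempel codeword as $\bc = \m G \cdot \diag(v_1, \ldots, v_n)$. Reading off the first $n-1$ entries from $G$ (ignoring the special last column), the $i$-th coordinate of $\bc$ is
\[
  c_i = v_i \sum_{j=0}^{k-1} m_j \Ga_i^{j} = v_i f(\Ga_i),
  \qquad i = 1, 2, \ldots, n-1,
\]
where $f(x) = m_0 + m_1 x + \cdots + m_{k-1} x^{k-1} \in \Fqxk$. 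The last coordinate, namely $v_n (m_{k-2} + \delta m_{k-1})$, is exactly the one we remove by puncturing.

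Next I would argue both inclusions between the punctured code and $\mC_{\GRS}(\bGa, \bv', k)$. For the forward inclusion, the computation above shows that puncturing at the last coordinate yields the vector $\evavp(f) = (v_1 f(\Ga_1), \ldots, v_{n-1} f(\Ga_{n-1}))$, which lies in $\mC_{\GRS}(\bGa, \bv', k)$. For the reverse inclusion, any element of $\mC_{\GRS}(\bGa, \bv', k)$ has the form $\evavp(f)$ for some $f \in \Fqxk$, and writing $f = \sum_{j=0}^{k-1} m_j x^j$ shows that $\evavp(f)$ arises as the first $n-1$ coordinates of the Roth--Lempel codeword associated with $\m = (m_0, \ldots, m_{k-1})$; hence it lies in the punctured code.

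There is essentially no obstacle here: the lemma follows directly from inspection of the first $k$ rows and first $n-1$ columns of $G$, which coincide with a Vandermonde matrix on $\Ga_1, \ldots, \Ga_{n-1}$. The only mild subtlety worth spelling out is that the last column of $G$ affects only the last coordinate of every codeword, so deleting that coordinate makes the parameter $\delta$ and the specially structured last column irrelevant, leaving behind the standard GRS encoding.
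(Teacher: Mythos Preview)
Your proof is correct and follows essentially the same approach as the paper: both arguments boil down to observing that deleting the last column of the generator matrix $G \cdot \diag(v_1,\ldots,v_n)$ leaves precisely a generator matrix of $\mC_{\GRS}(\bGa,\bv',k)$. The paper states this in one line, while you spell out the coordinate computation and both inclusions explicitly, but the content is the same.
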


\begin{proof}
  By Definition~\ref{def:rl}, a generator matrix of $\mC_{\RL}(\bGa, \bv, k, \delta)$ is
  \[
    G = \begin{bmatrix}
      1 & 1 & \cdots & 1 & 0 \\
      \Ga_1 & \Ga_2 & \cdots & \Ga_{n-1} & 0 \\
      \vdots & \vdots & \ddots & \vdots & \vdots \\
      \Ga_1^{k-2} & \Ga_2^{k-2} & \cdots & \Ga_{n-1}^{k-2} & 1 \\
      \Ga_1^{k-1} & \Ga_2^{k-1} & \cdots & \Ga_{n-1}^{k-1} & \delta
    \end{bmatrix}
    \cdot \diag(v_1,\ldots,v_n).
  \]

  Obviously, deleting the last column of $G$ yields a generator matrix of the GRS code $\mC_{\GRS}(\bGa, \bv', k)$. Thus, puncturing $\mC_{\RL}(\bGa, \bv, k, \delta)$ at the last coordinate yields $\mC_{\GRS}(\bGa, \bv', k)$.
\end{proof}

\subsection{Algorithm}
Based on the relationship between Roth--Lempel codes and GRS codes, we can employ the Guruswami--Sudan algorithm to decode Roth--Lempel codes. Given a received word $\br \in \Fq^n$, we first puncture its last coordinate to obtain
\[
  \br' \in \Fq^{n-1}.
\]

We then run the Guruswami--Sudan algorithm for the GRS code $\mC_{\GRS}(\bGa, \bv', k)$ with received word $\br'$ and decoding radius $\tau$, obtaining a list of candidate polynomials. Each candidate is re-encoded as a codeword of $\mC_{\RL}(\bGa, \bv, k, \delta)$, and we retain only those codewords whose distance to $\br$ does not exceed~$\tau$. The overall list decoding procedure is summarized in Algorithm~\ref{alg:list_dec_rl}.

\begin{alg}
  \label{alg:list_dec_rl}

  List decoding algorithm for the Roth--Lempel code $\mC_{\RL}(\bGa, \bv, k, \delta)$.

  Input: A received word $\br \in \Fq^n$ and a decoding radius $\tau$.

  Output: A list $\mL$ of polynomials $f \in \Fqxk$ or \fail{}.

  \begin{enumerate}
    \item Form $\br' \in \Fq^{n-1}$ by puncturing the last coordinate of $\br$.

    \item Run the Guruswami--Sudan algorithm on the GRS code $\mC_{\GRS}(\bGa, \bv', k)$ with received word $\br'$ and decoding radius $\tau$. If the algorithm outputs \fail{}, then output \fail{}. Otherwise, let $\mL_1$ be the resulting list of candidate polynomials in $\Fqx_{<k}$.
    \label{step:list_dec_rl__run_gs}

    \item Initialize $\mL$ as the empty list. For each polynomial $f$, encode $f$ with $\mC_{\RL}(\bGa, \bv, k, \delta)$ to obtain a codeword $\bc$. If
      \[
        \rd(\br, \bc) \le \tau,
      \]
      then append $f$ to $\mL$.
      \label{step:list_dec_rl__check_radius}

    \item Output the list $\mL$.
  \end{enumerate}

\end{alg}

\subsection{Decoding Radius and List Size}

\begin{theorem}
  \label{thm:list_dec_rl__correctness}
  Let $\mC_{\RL}(\bGa, \bv, k, \delta)$ be a Roth--Lempel code of length $n$ over $\Fq$, and let $\enc(f)$ denote the codeword of $\mC_{\RL}(\bGa, \bv, k, \delta)$ corresponding to a message polynomial $f \in \Fqxk$. 
  Given a received word $\br \in \Fq^n$ and a decoding radius $\tau$ satisfying
  \[
    \tau < (n - 1) - \sqrt{(n - 1)(k - 1)},
  \]
  Algorithm~\ref{alg:list_dec_rl} outputs a list $\mL$ such that
  \[
    \mL
      = \bigl\{ f \in \Fqxk : \rd\bigl(\enc(f), \br\bigr) \le \tau \bigr\}.
  \]

  In particular, for any fixed multiplicity $s$, the algorithm succeeds whenever
  \[
    \tau < (n-1) - \lfloor \sqrt{(n-1)(k-1)(1+1/s)} \rfloor.
  \]
\end{theorem}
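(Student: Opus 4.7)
The plan is to reduce list decoding of the Roth--Lempel code to list decoding of the associated GRS code furnished by Lemma~\ref{lem:rl_punctured_grs}, and then argue that the final filtering step recovers the desired list exactly.

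First I would verify that Step~\ref{step:list_dec_rl__run_gs} never returns \fail{}. By Lemma~\ref{lem:rl_punctured_grs}, puncturing $\mC_{\RL}(\bGa, \bv, k, \delta)$ at the last coordinate yields the GRS code $\mC_{\GRS}(\bGa, \bv', k)$ of length $n-1$, and similarly $\br'$ is the corresponding punctured received word. Under the hypothesis $\tau < (n-1) - \sqrt{(n-1)k}$, applying Theorem~\ref{thm:guruswami_sudan} to this length-$(n-1)$ GRS code guarantees that the Guruswami--Sudan routine succeeds and returns
\[
  \mL_1 = \bigl\{ f \in \Fqxk : \rd\bigl(\evavp(f), \br'\bigr) \le \tau \bigr\}.
\]
The parallel conclusion with a fixed multiplicity $s$ under $\tau \le (n-1) - \sqrt{(n-1)k(1+1/s)}$ follows from the same theorem.

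Next I would show that the target set $\mL^* := \{f \in \Fqxk : \rd(\enc(f), \br) \le \tau\}$ is contained in $\mL_1$. The key observation is that puncturing at any coordinate cannot increase Hamming distance: for any $f \in \mL^*$,
\[
  \rd\bigl(\evavp(f), \br'\bigr) \le \rd\bigl(\enc(f), \br\bigr) \le \tau,
\]
since $\evavp(f)$ and $\br'$ are obtained from $\enc(f)$ and $\br$ by deleting the same coordinate. Hence $f$ satisfies the Guruswami--Sudan inclusion criterion and therefore lies in $\mL_1$.

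Finally, Step~\ref{step:list_dec_rl__check_radius} retains an element $f \in \mL_1$ precisely when $\rd(\br, \enc(f)) \le \tau$, so the output satisfies $\mL = \mL_1 \cap \mL^*$. Combined with the inclusion $\mL^* \subseteq \mL_1$ just established, this gives $\mL = \mL^*$, as required. The proof involves no substantive obstacle; it is a clean reduction whose only mildly subtle point is the monotonicity of Hamming distance under puncturing, which provides the one-sided containment needed to transfer candidates from the punctured GRS list back to the full Roth--Lempel setting.
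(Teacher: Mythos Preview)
Your proof is correct and follows essentially the same approach as the paper: invoke Theorem~\ref{thm:guruswami_sudan} on the punctured GRS code to characterize $\mL_1$, use monotonicity of Hamming distance under puncturing to get $\mL^* \subseteq \mL_1$, and then observe that the filtering step intersects $\mL_1$ with $\mL^*$. The only minor difference is that you explicitly frame the non-\fail{} guarantee and the puncturing monotonicity, whereas the paper treats these as immediate; the logical content is identical.
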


\begin{proof}
  Let $\mL_1$ be the list returned by the Guruswami--Sudan algorithm in
  Step~\ref{step:list_dec_rl__run_gs} of Algorithm~\ref{alg:list_dec_rl}. 
  By Theorem~\ref{thm:guruswami_sudan} (applied to the GRS code
  $\mC_{\GRS}(\bGa,\bv',k)$ of length $n-1$), we have
  \[
    \mL_1
      = \bigl\{ f \in \Fqxk : \rd\bigl(\evavp(f), \br'\bigr) \le \tau \bigr\},
  \]
  where $\br'$ is the punctured word obtained from $\br$ by deleting its last
  coordinate.

  Fix $f \in \Fqxk$ and let $\bc = \enc(f)$ be the corresponding codeword in
  $\mC_{\RL}(\bGa, \bv, k, \delta)$. Let $\bc'$ be the punctured word obtained
  from $\bc$ by deleting its last coordinate. If $\rd(\bc, \br) \le \tau$, then clearly 
  
  \[
  \rd(\bc', \br') \le \tau.
  \]
  
  By Lemma~\ref{lem:rl_punctured_grs}, puncturing
  $\mC_{\RL}(\bGa, \bv, k, \delta)$ at the last coordinate yields the GRS code
  $\mC_{\GRS}(\bGa, \bv', k)$, and $\bc'$ is exactly the evaluation vector
  $\evavp(f)$ in this GRS code. Hence $f \in \mL_1$, and therefore
  \[
    \bigl\{ f \in \Fqxk : \rd\bigl(\enc(f), \br\bigr) \le \tau \bigr\}
      \subseteq \mL_1.
  \]

  In Step~\ref{step:list_dec_rl__check_radius} of
  Algorithm~\ref{alg:list_dec_rl}, the list $\mL$ is formed by retaining
  precisely those $f \in \mL_1$ whose corresponding codeword $\enc(f)$ lies
  within distance~$\tau$ of~$\br$, i.e.,
  \[
    \mL
      = \mL_1 \cap
        \bigl\{ f \in \Fqxk : \rd\bigl(\enc(f), \br\bigr) \le \tau \bigr\}.
  \]

  Combining this with the inclusion above yields
  \[
    \mL
      = \bigl\{ f \in \Fqxk : \rd\bigl(\enc(f), \br\bigr) \le \tau \bigr\},
  \]
  as claimed.

  Finally, for fixed $s$, the admissible choice
  $\tau < (n-1) - \lfloor \sqrt{(n-1)(k-1)(1+1/s)} \rfloor$ follows directly from Theorem~\ref{thm:guruswami_sudan}.
\end{proof}

The bound on the output list size follows directly from Lemma~\ref{lem:bound_on_list_size}.

\begin{proposition}
  \label{prop:list_size_rl}
  Let $\mL$ denote the list returned by Algorithm~\ref{alg:list_dec_rl} when decoding the Roth--Lempel code $\mC_{\RL}(\bGa, \bv, k, \delta)$ of length $n$. Then
  \[
    |\mL| \le \sqrt{(n-1)s(s+1)/(k-1)},
  \]
  where $s$ is the multiplicity in Guruswami--Sudan algorithm.

  In particular, for fixed-rate Roth--Lempel codes and fixed constant multiplicity $s$, Algorithm~\ref{alg:list_dec_rl} outputs a list of constant size.
\end{proposition}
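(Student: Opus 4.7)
The plan is to reduce the bound to Lemma~\ref{lem:bound_on_list_size} via the inclusion $\mL \subseteq \mL_1$, where $\mL_1$ is the intermediate list produced by the Guruswami--Sudan algorithm in Step~\ref{step:list_dec_rl__run_gs} of Algorithm~\ref{alg:list_dec_rl}.

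First I would recall that in Algorithm~\ref{alg:list_dec_rl}, the received word $\br$ of length $n$ is punctured to $\br' \in \Fq^{n-1}$, and then the Guruswami--Sudan algorithm is invoked on the GRS code $\mC_{\GRS}(\bGa, \bv', k)$ of length $n-1$. Applying Lemma~\ref{lem:bound_on_list_size} to this GRS code (with code length $n-1$ and dimension $k$) yields
\[
  |\mL_1| \le s\sqrt{(n-1)/k}.
\]
Then, since Step~\ref{step:list_dec_rl__check_radius} only retains a subset of $\mL_1$, namely those $f \in \mL_1$ whose Roth--Lempel encoding lies within distance $\tau$ of $\br$, we immediately get $\mL \subseteq \mL_1$ and therefore
\[
  |\mL| \le |\mL_1| \le s\sqrt{(n-1)/k}.
\]

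For the fixed-rate statement, I would substitute $k = Rn$ for a fixed constant $R \in (0,1)$, so that $(n-1)/k \le 1/R$ is bounded by a constant. Combined with fixed constant multiplicity $s = O(1)$, the bound above becomes $|\mL| = O(1)$, proving the final claim.

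There is essentially no obstacle here: the argument is a direct corollary of Lemma~\ref{lem:bound_on_list_size} together with the observation, already established in Lemma~\ref{lem:rl_punctured_grs} and used in Theorem~\ref{thm:list_dec_rl__correctness}, that the decoder reduces the problem to list decoding a GRS code of length $n-1$. The only minor point to be careful about is to apply the lemma with parameter $n-1$ rather than $n$, since puncturing reduces the code length by one.
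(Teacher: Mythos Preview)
Your proof is correct and matches the paper's approach exactly: the paper simply remarks that the bound follows directly from Lemma~\ref{lem:bound_on_list_size}, and your argument spells this out by applying the lemma to the punctured GRS code of length $n-1$ and using the inclusion $\mL \subseteq \mL_1$.
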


\subsection{Complexity Analysis}

We then analyze the time complexity of Algorithm~\ref{alg:list_dec_rl}.

\begin{theorem}
  \label{thm:list_dec_rl__complexity}
  Let $\mC_{\RL}(\bGa, \bv, k, \delta)$ be a Roth--Lempel code of length $n$ over $\Fq$. Then Algorithm~\ref{alg:list_dec_rl} runs in time
  \[
    O\!\left(\Bigl(s\frac{n}{k}\Bigr)^{O(1)} n\log^2 n\log\log n\right).
  \]
\end{theorem}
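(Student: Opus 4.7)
The plan is to decompose Algorithm~\ref{alg:list_dec_rl} into its three stages and bound each separately, then show that the Guruswami--Sudan invocation dominates. Puncturing $\br$ to form $\br'$ is a trivial copy and costs $O(n)$. For Step~\ref{step:list_dec_rl__run_gs}, I would invoke Theorem~\ref{thm:guruswami_sudan} directly on the GRS code $\mC_{\GRS}(\bGa,\bv',k)$, which has length $n-1$ and dimension $k$; Alekhnovich's implementation then yields a running time of $O\bigl((s(n-1)/k)^{O(1)}(n-1)\log^2(n-1)\log\log(n-1)\bigr)$, which after absorbing the $-1$'s into constants is simply $O\bigl((sn/k)^{O(1)} n\log^2 n\log\log n\bigr)$.

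The second task is to bound the filtering stage, Step~\ref{step:list_dec_rl__check_radius}. For each $f \in \mL_1$ I need to compute the codeword $\enc(f) \in \mC_{\RL}(\bGa,\bv,k,\delta)$ and its Hamming distance to $\br$. By the form of the generator matrix in Definition~\ref{def:rl}, this encoding amounts to a multipoint evaluation of $f$ at the points $\Ga_1,\ldots,\Ga_{n-1}$ scaled by $v_1,\ldots,v_{n-1}$, together with one additional coordinate equal to $v_n\,\delta\,f_{k-1}$ (plus the appropriate contribution from $f_{k-2}$ accounted for in the second-to-last row). Using the standard quasi-linear multipoint-evaluation algorithm \cite{borodin1974fast}, the encoding of each $f$ takes $O(n\log^2 n)$ field operations, and the Hamming comparison with $\br$ adds only $O(n)$ more. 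Proposition~\ref{prop:list_size_rl} bounds $|\mL_1|\le s\sqrt{(n-1)/k}$, so the total filtering cost is at most $O\bigl(s\sqrt{n/k}\cdot n\log^2 n\bigr)$.

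The last step of the plan is simply to add the two bounds and verify absorption. Since $s\sqrt{n/k}$ is polynomial in $sn/k$, the factor $s\sqrt{n/k}\cdot n\log^2 n$ fits inside $(sn/k)^{O(1)} n\log^2 n\log\log n$, so the sum of the interpolation/root-finding cost and the filtering cost remains $O\bigl((sn/k)^{O(1)} n\log^2 n\log\log n\bigr)$, which is the claimed bound.

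I do not anticipate a genuine obstacle in executing this plan. The only minor subtlety is to confirm that Roth--Lempel encoding truly runs in quasi-linear time, and this is routine: the generator matrix of $\mC_{\RL}(\bGa,\bv,k,\delta)$ agrees with that of a GRS code in its first $n-1$ columns, and the extra column contributes only an $O(1)$ correction per codeword symbol. Thus the entire analysis reduces to citing Theorem~\ref{thm:guruswami_sudan}, Proposition~\ref{prop:list_size_rl}, and a standard fast-evaluation complexity bound.
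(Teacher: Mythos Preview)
Your proposal is correct and follows essentially the same approach as the paper's own proof: both bound Step~\ref{step:list_dec_rl__run_gs} via Theorem~\ref{thm:guruswami_sudan} applied to the length-$(n-1)$ GRS code, bound Step~\ref{step:list_dec_rl__check_radius} using Proposition~\ref{prop:list_size_rl} together with an $O(n\log^2 n)$ fast multipoint-evaluation encoding per candidate, and then absorb the filtering cost into the Guruswami--Sudan term. The only cosmetic difference is that you spell out the last coordinate computation a bit more explicitly than the paper, which simply notes it takes $O(1)$ additional operations.
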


\begin{proof}
  By Theorem~\ref{thm:guruswami_sudan}, applied to the GRS code
  $\mC_{\GRS}(\bGa,\bv',k)$ of length $n-1$, 
  Step~\ref{step:list_dec_rl__run_gs} runs in time
  \[
  \begin{split}
    & O\!\left(\Bigl(s\frac{n-1}{k}\Bigr)^{O(1)} (n-1)\log^2(n-1)\log\log(n-1)\right) \\
    ={} & O\!\left(\Bigl(s\frac{n}{k}\Bigr)^{O(1)} n\log^2 n\log\log n\right).
  \end{split}
  \]

  In Step~\ref{step:list_dec_rl__check_radius}, we process at most $\sqrt{(n-1)s(s+1)/(k-1)} = O(s\sqrt{n/k})$
  polynomials by Lemma~\ref{lem:bound_on_list_size}. For a fixed candidate $f$, we compute the associated
  Roth--Lempel codeword $\bc = \enc(f)$ and its Hamming distance to $\br$.

  Encoding a Roth--Lempel codeword amounts to evaluating a degree-$<k$
  polynomial at $n-1$ points (the first $n-1$ coordinates) and then computing
  the last coordinate. Using fast multipoint evaluation, this costs
  $O(n\log^2 n)$ field operations, while computing the last coordinate takes
  $O(1)$ additional operations. The Hamming distance to $\br$ can then be
  computed in $O(n)$ time. Thus each candidate is processed in
  $O(n\log^2 n)$ time, and the overall cost of
  Step~\ref{step:list_dec_rl__check_radius} is
  \[
    O\!\left(s\sqrt{n/k} \, n\log^2 n\right).
  \]

  This term is asymptotically dominated by
  $O\!\bigl((sn/k)^{O(1)} n\log^2 n\log\log n\bigr)$, so the total running time
  of Algorithm~\ref{alg:list_dec_rl} is
  \[
    O\!\left(\Bigl(s\frac{n}{k}\Bigr)^{O(1)} n\log^2 n\log\log n\right),
  \]
  as claimed.
\end{proof}

The following corollary follows directly from Theorem~\ref{thm:list_dec_rl__complexity}.

\begin{corollary}
  \label{cor:list_dec_rl__complexity_fixed_rate}
  Let $\mC_{\RL}(\bGa, \bv, k, \delta)$ be a Roth--Lempel code of fixed rate $R = k/n$ and fix the multiplicity $s$ as a constant. Then Algorithm~\ref{alg:list_dec_rl} runs in time
  \[
    O\!\left(n\log^2 n\log\log n\right).
  \]
\end{corollary}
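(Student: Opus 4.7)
The plan is to deduce the corollary directly from Theorem~\ref{thm:list_dec_rl__complexity}, which already gives the running time of Algorithm~\ref{alg:list_dec_rl} as
\[
  O\!\left(\Bigl(s\frac{n}{k}\Bigr)^{O(1)} n\log^2 n\log\log n\right).
\]
The only thing to do is to argue that, under the two hypotheses of the corollary, the prefactor $(sn/k)^{O(1)}$ collapses to an absolute constant that can be absorbed into the big-$O$.

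First, I would observe that the hypothesis ``fixed rate $R = k/n$'' means $R$ is treated as a constant independent of $n$, so $n/k = 1/R = O(1)$. Next, since $s$ is by hypothesis a fixed constant, the product $sn/k$ is bounded by a constant depending only on $R$ and $s$, and hence $(sn/k)^{O(1)}$ is itself a constant with respect to $n$. Substituting this back into the bound of Theorem~\ref{thm:list_dec_rl__complexity} yields the claimed complexity $O(n\log^2 n\log\log n)$.

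There is no real obstacle here: the statement is a book-keeping consequence of the previously proved complexity bound, and the argument amounts to making the dependence on the hidden constants explicit. The only mild subtlety is to be careful that the implicit constant in the exponent $O(1)$ of $(sn/k)^{O(1)}$ from Alekhnovich's analysis is absolute, so that raising a constant to this exponent still yields a constant; this is clear from the statement of Theorem~\ref{thm:guruswami_sudan}, which was already used in the proof of Theorem~\ref{thm:list_dec_rl__complexity}.
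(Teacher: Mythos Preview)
Your proposal is correct and matches the paper's approach: the paper states that the corollary ``follows directly from Theorem~\ref{thm:list_dec_rl__complexity}'' without giving an explicit proof, and your argument is precisely the intended unpacking of that remark. There is nothing to add.
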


\subsection{Unique Decoding of MDS Roth--Lempel Codes}

The list decoding algorithm for Roth--Lempel codes can also be used for unique decoding. We present the algorithm below and analyze the conditions for the algorithm to work.

\begin{alg}
  \label{alg:unique_dec_rl}

  Unique decoding algorithm for MDS Roth--Lempel code $\mC_{\RL}(\bGa, \bv, k, \delta)$.

  Input: A received word $\br \in \Fq^n$.

  Output: A polynomial $f$ or \fail{}.

  \begin{enumerate}
    \item Run Algorithm~\ref{alg:list_dec_rl} on the Roth--Lempel code $\mC_{\RL}(\bGa, \bv, k, \delta)$ with received word $\br$ and decoding radius $\tau = \lfloor\frac{n - k}{2}\rfloor$. If the algorithm outputs \fail{}, output \fail{}. Else, let $\mL$ be the list of polynomials returned by the algorithm.
    \label{step:unique_dec_rl__run_list_dec}

    \item If there is a unique polynomial $f$ in $\mL$, output $f$. Else, output \fail{}.
    
    \label{step:unique_dec_rl__output}
  \end{enumerate}
\end{alg}
\begin{theorem}
  \label{thm:unique_dec_rl__correctness}
  Let $\mC_{\RL}(\bGa, \bv, k, \delta)$ be an MDS Roth--Lempel code of length $n$ over $\Fq$, and let $\enc(f)$ denote the codeword of $\mC_{\RL}(\bGa, \bv, k, \delta)$ corresponding to a message polynomial $f \in \Fqxk$. 

  Assume that $k < n$. Then Algorithm~\ref{alg:unique_dec_rl} outputs
  \begin{enumerate}
    \item the unique polynomial $f \in \Fqxk$ such that
      \[
        \rd\bigl(\enc(f), \br\bigr) \le \left\lfloor\frac{n-k}{2}\right\rfloor
      \]
      if such a polynomial exists, or
    \item \fail{} otherwise.
  \end{enumerate}

  In particular, for any fixed multiplicity $s$, the algorithm achieves the above optimal unique-decoding radius whenever
  \[
    (n - 1) + (k-1) - 2\sqrt{(n-1)(k-1)(1+1/s)} > 0.
  \]
\end{theorem}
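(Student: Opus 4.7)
The plan is to follow the same template as the proof of Theorem \ref{thm:unique_dec_tgrs__correctness}: reduce the unique decoding claim to a single invocation of the list decoder of Theorem \ref{thm:list_dec_rl__correctness}, and then exploit the MDS property of $\mC_{\RL}$ to show that the returned list has size at most one.

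First, I would show that Step \ref{step:unique_dec_rl__run_list_dec} never outputs \fail{}. By Theorem \ref{thm:list_dec_rl__correctness}, it suffices to verify that the chosen radius $\tau = \lfloor (n-k)/2 \rfloor$ satisfies $\tau < (n-1) - \sqrt{(n-1)k}$. Since $\tau \le (n-k)/2$, I would reduce this to the inequality $\sqrt{(n-1)k} < (n+k-2)/2$, and then derive this inequality by squaring the hypothesis $\sqrt{n-1} - \sqrt{k} > 1$ and rearranging to get $(n-1)+k - 2\sqrt{(n-1)k} > 1$.

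Second, I would apply Theorem \ref{thm:list_dec_rl__correctness} to conclude that the list $\mL$ returned in Step \ref{step:unique_dec_rl__run_list_dec} equals
\[
\bigl\{f \in \Fqxk : \rd(\enc(f), \br) \le \lfloor (n-k)/2 \rfloor\bigr\}.
\]
Because $\mC_{\RL}(\bGa, \bv, k, \delta)$ is an $[n,k]$ MDS code, its minimum distance is $d = n-k+1$, so the classical unique-decoding radius equals $\lfloor (d-1)/2 \rfloor = \lfloor (n-k)/2 \rfloor$. Hence every Hamming ball of this radius contains at most one codeword of $\mC_{\RL}$, forcing $|\mL| \le 1$. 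Consequently, Step \ref{step:unique_dec_rl__output} outputs the unique $f$ when one exists and \fail{} otherwise, which establishes the main claim.

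For the fixed-multiplicity refinement, I would instead invoke the second half of Theorem \ref{thm:list_dec_rl__correctness}, which requires $\tau \le (n-1) - \sqrt{(n-1)k(1+1/s)}$. Combined with $\tau \le (n-k)/2$, a short rearrangement yields exactly the stated condition $(n-1) + k - 2\sqrt{(n-1)k(1+1/s)} \ge 1$. I do not anticipate a deep obstacle: the only delicate point is propagating the strict hypothesis $\sqrt{n-1} - \sqrt{k} > 1$ through the floor in $\tau$ to the strict inequality demanded by the list decoder, which reduces to the short arithmetic check already isolated above.
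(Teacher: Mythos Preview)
Your proposal is correct and follows essentially the same argument as the paper: verify that $\tau=\lfloor(n-k)/2\rfloor$ meets the radius condition of Theorem~\ref{thm:list_dec_rl__correctness} by squaring the hypothesis $\sqrt{n-1}-\sqrt{k}>1$ to obtain $(n-1)+k-2\sqrt{(n-1)k}>1$, then use the MDS property to bound $|\mL|\le 1$, and handle the fixed-$s$ case by the analogous rearrangement. The paper presents the same arithmetic chain (written in the equivalent form $(\sqrt{n-1}-\sqrt{k})^2>1$) and the same uniqueness argument, so there is no substantive difference.
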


\begin{proof}
  By Theorem~\ref{thm:list_dec_rl__correctness}, Algorithm~\ref{alg:list_dec_rl} (and hence Step~\ref{step:unique_dec_rl__run_list_dec} of Algorithm~\ref{alg:unique_dec_rl}) returns a list whenever
  \[
    \tau = \left\lfloor\frac{n - k}{2}\right\rfloor
      < (n-1) - \sqrt{(n-1)(k-1)}.
  \]

  Since $\tau \le \frac{n-k}{2}$, it suffices that $\frac{n-k}{2} < (n-1) - \sqrt{(n-1)(k-1)}$, which is equivalent to
  \[
  \begin{split}
    & n - k < 2n - 2 - 2\sqrt{(n-1)(k-1)} \\
    \quad\Longleftrightarrow\quad
    & (n-1) + (k-1) > 2\sqrt{(n-1)(k-1)} \\
    \quad\Longleftrightarrow\quad
    & (\sqrt{n-1} - \sqrt{k-1})^2 > 0.
  \end{split}
  \]

  This is guaranteed by the assumption $k < n$. Hence, under this assumption, Algorithm~\ref{alg:list_dec_rl} does not output \fail{}, and the returned list $\mL$ consists precisely of all polynomials $f$ such that
  \[
    \rd\bigl(\enc(f), \br\bigr) \le \left\lfloor\frac{n-k}{2}\right\rfloor.
  \]

  Since $\mC_{\RL}(\bGa, \bv, k, \delta)$ is MDS, its minimum distance is $d = n-k+1$, and thus the unique-decoding radius is
  $\left\lfloor\frac{d-1}{2}\right\rfloor = \left\lfloor\frac{n-k}{2}\right\rfloor$.

  Consequently, at most one codeword of $\mC_{\RL}$ lies in the Hamming ball of $\br$ with radius $\left\lfloor\frac{n-k}{2}\right\rfloor$, and hence $\lvert \mL \rvert \le 1$.

  Finally, Algorithm~\ref{alg:unique_dec_rl} outputs the unique $f$ when $\lvert \mL \rvert = 1$, and outputs \fail{} when $\lvert \mL \rvert = 0$, which proves the claim.

  For fixed $s$, Theorem~\ref{thm:list_dec_rl__correctness} guarantees successful list decoding whenever
  \[
    \tau = \left\lfloor\frac{n-k}{2}\right\rfloor < (n-1) - \lfloor \sqrt{(n-1)(k-1)(1+1/s)}\rfloor.
  \]

  This inequality holds under the condition $(n - 1) + (k-1) - 2\sqrt{(n-1)(k-1)(1+1/s)} > 0$, completing the proof.
\end{proof}

\subsection{Decoding Roth--Lempel Codes with AMD Codes}

As in the case of TGRS codes, AMD codes can be employed to select the correct polynomial from the output list of the list decoding algorithm for Roth--Lempel codes. Since the construction and analysis are analogous to those in Section~\ref{sec:decoding_tgrs_amd}, we omit the full details and only state the corresponding result.

\begin{theorem}
  \label{thm:decoding_rl_amd}
  Let $\mC_{\RL}(\bGa, \bv, k+2b, \delta)$ be a Roth--Lempel code. Replace $\mC_{\TGRS}(\bGa, \bv, k+2b, \mT)$ by $\mC_{\RL}(\bGa, \bv, k+2b, \delta)$ in Algorithm~\ref{alg:encoding_tgrs_amd} and Algorithm~\ref{alg:decoding_tgrs_amd} to obtain, respectively, an encoding and a decoding procedure for Roth--Lempel codes augmented with AMD codes.

  Suppose a message $\m \in \Fqk$ is encoded by this encoding algorithm into a codeword $\bc \in \mC_{\RL}(\bGa, \bv, k+2b, \delta)$. Let $\br$ be a received word and let $\tau$ be a decoding radius such that
  \[
    \tau < (n-1) - \sqrt{(n-1)(k+2b-1)}.
  \]

  In particular, for any fixed multiplicity $s$, the algorithm succeeds whenever
  \[
    \tau < (n-1) - \lfloor \sqrt{(n-1)(k+2b-1)(1+1/s)} \rfloor.
  \]

  If $\rd(\br,\bc) \le \tau$, then the decoding algorithm outputs $\m$ with probability at least
  \[
    1 - \frac{\lceil k/b \rceil + 2}{q^b}\!\left(\sqrt{\frac{(n - 1)s(s+1)}{k+2b-1}} - 1\right).
  \]

  Counting operations over $\Fq$ as basic operations, the encoding algorithm runs in time $O(n \log^2 n)$, and the decoding algorithm runs in time
  \[
    O\!\left(\Bigl(s\frac{n}{k+2b}\Bigr)^{O(1)} n\log^2 n\log\log n\right).
  \]
\end{theorem}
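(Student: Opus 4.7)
The plan is to carry over the proof of Theorem~\ref{thm:decoding_tgrs_amd__correctness} almost verbatim, substituting the Roth--Lempel list decoder (Algorithm~\ref{alg:list_dec_rl}) for the TGRS list decoder and invoking the Roth--Lempel analogues (Theorem~\ref{thm:list_dec_rl__correctness}, Proposition~\ref{prop:list_size_rl}, Theorem~\ref{thm:list_dec_rl__complexity}) in place of their TGRS counterparts. First I would verify that the transmitted message $\m$ always survives both the list-decoding step and the AMD filtering step. Let $\bg = (\m \,\|\, x \,\|\, t) \in \Fq^{k+2b}$ be the AMD-augmented information vector produced by the encoder, and let $\bc = \enc(\bg) \in \mC_{\RL}(\bGa,\bv,k+2b,\delta)$. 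Under the hypothesis $\tau < (n-1) - \sqrt{(n-1)(k+2b)}$, Theorem~\ref{thm:list_dec_rl__correctness} (applied with dimension $k+2b$) ensures that Algorithm~\ref{alg:list_dec_rl} returns a list $\mL_1$ equal to the set of $f \in \Fq[x]_{<k+2b}$ with $\rd(\enc(f),\br) \le \tau$; the polynomial encoding $\bg$ therefore lies in $\mL_1$ and trivially passes the AMD check.

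Next I would bound the probability that any other candidate in $\mL_1$ passes AMD verification. Enumerate the candidate augmented vectors as $\bg_1 = \bg, \bg_2, \ldots, \bg_L$ with $L \le s\sqrt{(n-1)/(k+2b)}$ by Proposition~\ref{prop:list_size_rl}, and let $\be = \br - \bc$. By linearity of $\mC_{\RL}$, the differences $\delta_j = \bg_j - \bg$ are fully determined by $\be$ and are therefore independent of the AMD random seed $x'$. Each nonzero $\delta_j$ may thus be treated as a fixed algebraic manipulation of $\bg$, so Lemma~\ref{lem:amd_construction} applied over $\Fqb$ bounds the per-candidate failure probability by $(r+1)/q^b \le (\lceil k/b \rceil + 2)/q^b$ (using $r \le \lceil k/b \rceil + 1$ as in Algorithm~\ref{alg:amd_tag}). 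A union bound over the $L - 1$ manipulated candidates then yields the announced success probability, and the fixed-$s$ radius clause follows directly from the corresponding clause of Theorem~\ref{thm:list_dec_rl__correctness}.

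For the complexity claims, the encoder performs one AMD tag computation in time $O(k \log b)$ by Lemma~\ref{lem:amd_complexity}, followed by RL encoding of a length-$(k+2b)$ vector: the first $n-1$ components arise from fast multipoint evaluation of a polynomial of degree less than $k+2b$, and the last component requires $O(1)$ extra arithmetic, giving total encoding cost $O(n \log^2 n)$. For decoding, Theorem~\ref{thm:list_dec_rl__complexity} bounds the list-decoding step by $O\!\bigl((sn/(k+2b))^{O(1)}\, n \log^2 n \log\log n\bigr)$, while the AMD verification contributes at most $O\!\bigl(s\sqrt{(n-1)/(k+2b)} \cdot k \log b\bigr)$, which is absorbed into the list-decoding term since $k \le n$ and $\log b \le \log n$.

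I do not expect any genuinely new obstacle beyond the one already resolved in the TGRS setting, namely the need to argue that the effective AMD manipulation $\delta_j$ is fixed before the AMD seed $x'$ is drawn; this argument transports unchanged because $\mC_{\RL}$ is linear and the AMD augmentation enters only through the message coordinates. The only care point is bookkeeping with the shifted dimension $k+2b$ when invoking Theorem~\ref{thm:list_dec_rl__correctness} and Proposition~\ref{prop:list_size_rl}, and using the basis-dependent packing/unpacking between $\Fqbv$ and $\Fqb$ consistently on both sides of the AMD check.
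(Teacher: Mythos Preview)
Your proposal is correct and follows essentially the same approach as the paper: the paper's proof simply defers to the argument of Theorem~\ref{thm:decoding_tgrs_amd__correctness} with Theorem~\ref{thm:list_dec_rl__correctness}, Proposition~\ref{prop:list_size_rl}, and Theorem~\ref{thm:list_dec_rl__complexity} in place of their TGRS counterparts, and handles the encoding and decoding complexity exactly as you describe. Your write-up is in fact more explicit than the paper's own proof, which is terse and largely points back to the TGRS case.
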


\begin{proof}
  The correctness of the decoding algorithm follows from
  Theorem~\ref{thm:list_dec_rl__correctness} together with the same argument as
  in the proof of Theorem~\ref{thm:decoding_tgrs_amd__correctness}. In particular,
  under the stated condition on~$\tau$, the underlying list decoder never
  outputs \fail{}, and the AMD verification step then identifies the transmitted
  message with the claimed probability via a union bound over all nonzero
  manipulations.

  For the complexity, recall that encoding a Roth--Lempel codeword can be done
  in $O(n\log^2 n)$ time via fast multipoint evaluation. The additional AMD-tag
  computation costs $O(k\log b)$ operations over $\Fq$ by
  Lemma~\ref{lem:amd_complexity}, which is asymptotically dominated by
  $O(n\log^2 n)$ since $k \le n$ and $b \le n$. Hence the overall encoding
  complexity is $O(n\log^2 n)$.

  For the decoding procedure, invoking the list decoder for
  $\mC_{\RL}(\bGa, \bv, k+2b, \delta)$ costs
  \[
    O\!\left(\Bigl(s\frac{n}{k+2b}\Bigr)^{O(1)} n\log^2 n\log\log n\right)
  \]
  by Theorem~\ref{thm:list_dec_rl__complexity}. The AMD verification checks at
  most $\sqrt{(n-1)s(s+1)/(k+2b-1)} = O(s\sqrt{n/(k+2b)})$ candidates and, by Lemma~\ref{lem:amd_complexity},
  contributes
  \[
    O\!\left(s\sqrt{\frac{n}{k+2b}} \cdot k \log b\right)
  \]
  operations over $\Fq$, which is asymptotically dominated by the bound above.
  Thus the stated decoding complexity follows.
\end{proof}

We present a numerical example that illustrates the AMD-assisted decoding procedure for Roth--Lempel codes. We choose a received word with two candidates within the decoding radius and show how AMD identifies the intended message.

\begin{example}
  \label{exm:rl_with_amd}
  Consider the Roth--Lempel code $\mC_{\RL}(\{0,1,\ldots,22\}, \boldsymbol{1}, 6, 4)$ over $\F_{23}$. 
  Suppose the message $(18, 16, 9, 16)$ is encoded into the codeword
  \[
  \begin{split}
    (&18, 13, 7, 4, 8, 0, 7, 10, 13, 20, 12, \\
     &16, 13, 7, 2, 2, 11, 10, 3, 17, 10, 9, 18, 0).
  \end{split}
  \]

  Assume the transmission is affected by the error vector of Hamming weight $\lceil (n-1) - \sqrt{(n-1)(k-1)}\rceil - 1 = 12$:
  \[
  \begin{split}
    (&4, 21, 13, 0, 0, 0, 0, 0, 0, 0, 8, 22,\\
     &0, 0, 15, 8, 0, 9, 14, 10, 22, 0, 10, 0).
  \end{split}
  \]

  Running the list decoding algorithm on the received word yields two candidate
  AMD-augmented messages:
  \[
    (22, 2, 2, 20, 7, 4), \qquad (18, 16, 9, 16, 0, 0).
  \]

  The first candidate fails the AMD verification, while the second one passes and
  corresponds to the originally transmitted message.
\end{example}

\section{Conclusion}

\label{sec:conclusion}

In this work, we developed Guruswami--Sudan-based decoding algorithms for twisted GRS codes and Roth--Lempel codes.

For TGRS codes, we viewed them as subcodes of suitable GRS codes and exploited this relation to obtain an efficient list decoder that corrects up to $\tau < n - \sqrt{n(k'-1)}$ errors. Building on this result, we further derived a unique decoder for MDS TGRS codes that, for fixed rate, fixed constant multiplicity $s$ and $k' < \frac{(n + k)^2}{4n(1+1/s)} + 1$, runs in near-linear time and supports up to $\ell = O(n\log^2 n \log\log n)$ twists.

For Roth--Lempel codes, we gave what appears to be the first explicit decoding algorithms, achieving list decoding radius $\tau < (n-1) - \sqrt{(n-1)(k-1)}$ and unique decoding up to half the minimum distance when $k < n$.

In both settings we combined list decoding with optimal systematic AMD codes to obtain probabilistic decoders that, with high probability, recover the unique transmitted message beyond the classical MDS bound.

\ifCLASSOPTIONcaptionsoff
  \newpage
\fi

\bibliographystyle{IEEEtran}
\bibliography{references.bib}

\begin{IEEEbiographynophoto}{Runtian Zhu}
received the B.E. degree in computer science from the School of Computer Science and Technology, East China Normal University, Shanghai, China, in 2023. He is currently pursuing the Ph.D. degree with the College of Computer Science and Artificial Intelligence, Fudan University, Shanghai, China. His research interests include coding theory and its applications.
\end{IEEEbiographynophoto}

\begin{IEEEbiographynophoto}{Lingfei Jin}
(Member, IEEE) received the Ph.D. degree from Nanyang Technological University, Singapore, in 2013. Currently, she is a Full Professor with the College of Computer Science and Artificial Intelligence, Fudan University, Shanghai, China. Her research interests include classical coding theory and quantum coding theory.
\end{IEEEbiographynophoto}

\end{document}